\theoremstyle{plain}
\newtheorem{theorem}{Theorem}
\newtheorem{lemma}[theorem]{Lemma}
\newtheorem{corollary}[theorem]{Corollary}
\newtheorem{proposition}[theorem]{Proposition}
\theoremstyle{definition}
\newtheorem{definition}[theorem]{Definition}
\newtheorem{example}[theorem]{Example}
\theoremstyle{remark}
\newtheorem*{remark}{Remark}
\newcommand{\MLm}{\mathsf{MATLANG}}
\newcommand{\ML}{$\MLm$\xspace}
\newcommand{\ARAm}{\mathsf{ARA}}
\newcommand{\ARA}{$\ARAm$\xspace}
\newcommand{\ARAC}{$(\ARAm+\zeta_k)(k)$\xspace}
\newcommand{\ARACTWO}{$(\ARAm+\zeta_2)(2)$\xspace}
\newcommand{\Rel}{\mathrm{Rel}}
\newcommand{\Mat}{\mathrm{Mat}}
\DeclareMathOperator{\sdiff}{\triangle}
\title{On matrices and $K$-relations}
\author{Robert Brijder}
\author{Marc Gyssens}
\author{Jan Van den Bussche}
\affil{Hasselt University, Belgium}
\date{}
\begin{document}

\maketitle

\begin{abstract}
We show that the matrix query language \ML corresponds to a
natural fragment of the positive relational algebra on
$K$-relations.  The fragment is defined by introducing a
composition operator and restricting $K$-relation arities to two.
We then proceed to show that \ML can express all matrix
queries expressible in the positive relational algebra on
$K$-relations, when intermediate arities are restricted to three.
Thus we offer an analogue, in a model with numerical data, to the
situation in classical logic, where the algebra of binary
relations is equivalent to first-order logic with three
variables.
\end{abstract}

\section{Introduction}

Motivated by large-scale data science, there is recent interest
in supporting linear algebra operations, such as matrix
multiplication, in database systems.  This has prompted
investigations comparing the expressive power of common matrix
operations with the operations on relations provided by the
relational algebra and SQL
\cite{hutchison,polystorelalg,jermaine_linalgebra,matlang_icdt}.

For \emph{boolean} matrices, the connection between matrices and
relations is very natural and well known.  An $m \times n$
boolean matrix $A$ can be viewed as a binary relation $R
\subseteq \{1,\dots,m\} \times \{1,\dots,n\}$, where $R$ consists
of those pairs $(i,j)$ for which $A_{i,j} = 1$.  Boolean matrix
multiplication then amounts to composition of binary relations.
Composition is the central operation in the \emph{algebra of
binary relations}
\cite{tarski_relcalc,maddux_originra,pratt_relcalc}.  Besides
composition, this algebra has operations such as converse, which
corresponds to transposition of a boolean matrix; union and
complement, which correspond to disjunction and negation of
boolean matrices; and the empty and identity relations, which
correspond to the zero and identity matrices.

A common theme in research in the foundations of databases is the
expressive power of query languages \cite{ahv_book}.  When we
employ a query language, we would like to understand as well as
possible what we can do with it.  Of this kind is the classical
Codd theorem, stating the equivalence between the standard
relational algebra and first-order logic.  Likewise, for the
algebra of binary relations, a classical result
\cite{tarskigivant} is that it has the same expressive power as
the formulas with two free variables in $\mathrm{FO}(3)$, the three-variable
fragment of first-order logic.  In this sense, we understand
quite well the expressive power of a natural set of operations on
boolean matrices.

What can now be said in this regard
about more general matrices, with entries that are not just
boolean values?  An $m \times n$ matrix with entries in some
semiring $K$ is essentially a mapping from $\{1,\dots,m\} \times
\{1,\dots,n\}$ to $K$.  This perfectly fits the data model of
\emph{$K$-relations} introduced by Green, Garvounarakis and
Tannen \cite{provsemirings}.  In general, consider an infinite
domain $\mathbf{dom}$ and a supply of attributes.  In a database instance,
we assign to each attribute a range of values, in the form of a
finite subset of $\mathbf{dom}$.  Attributes can be declared to be
compatible; compatible attributes have the same range.  A
relation schema $S$ is a finite set of attributes.  Tuples over
$S$ are mappings that assign to each attribute a value of the
appropriate range.  Now a $K$-relation over $S$ is a mapping that
assigns to each tuple over $S$ an element of $K$.

So, an $m \times n$ matrix $X$ can be seen as a $K$-relation over
two attributes $A$ and $B$ where the range of $A$ is
$\{1,\dots,m\}$ and the range of $B$ is $\{1,\dots,n\}$.  We can
assume an order on all attributes and choose $A<B$ so that we
know which values are row indices and which are column indices.
Then an $n \times k$ matrix $Y$ is modeled using attributes $C<D$
where we choose $C$ and $B$ compatible, to reflect that the
number of columns of matrix $X$ equals the number of rows of
matrix $Y$.  We can view vectors as $K$-relations over a single
attribute, and scalars as $K$-relations over the empty schema.
In general, a $K$-relation of arity $r$ is essentially an
$r$-dimensional tensor (multidimensional array).
(Because we need not necessarily assume an order on
$\mathbf{dom}$, the tensor is unordered.)

Green et al.\ defined a generalization of the positive relation
algebra working on $K$-relations, which we denote here by
\ARA.\footnote{\ARA stands for Annotated-Relation Algebra, as the
elements from $K$ that a $K$-relation assigns to its tuples
are usually viewed as annotations.}  When we restrict
\ARA to arities of at most three, which we denote by $\ARAm(3)$, we
obtain an analogue to $\mathrm{FO}(3)$ mentioned above.  So, \ARA provides a
suitable scenario to reinvestigate, in a data model with
numerical values, the equivalence between the algebra of binary
relations and $\mathrm{FO}(3)$.  In this paper we offer the following
contributions.

\begin{enumerate}

\item

We define a suitable generalization, to $K$-relations, of the
composition operation of classical binary relations.  When we add
this composition operator to \ARA, but restrict arities to at most
two, we obtain a natural query language for matrices.  We refer
    to this language here as ``$\ARAm(2)$ plus composition''.

\item

  We show that $\ARAm(2)$ plus composition actually coincides with the
matrix query language \ML, introduced by two of the present
    authors with
Geerts and Weerwag \cite{matlang_icdt} in an attempt to
formalize the set of common matrix operations found in numerical
software packages.

\item

  We show that a matrix query is expressible in $\ARAm(3)$ if and only
    if it is expressible in \ML, thus providing an analogue
    to the classical result about $\mathrm{FO}(3)$ and the algebra of binary
    relations.  More generally, for any arity $r$, we show that
    an $r$-ary query over $r$-ary $K$-relations is expressible in
    $\ARAm(r+1)$ if and only if it is expressible in $\ARAm(r)$ plus
    composition.  For this result, we need the assumption that
    $K$ is commutative.  We stress that
    the proof is not a trivial adaptation of
    the proof of the classical result, because we can no longer
    rely on familiar classical properties like idempotence of
    union and join.

\end{enumerate}

\ARA has been a very influential vehicle for data
provenance.\footnote{The paper \cite{provsemirings} received the
PODS 2017 test-of-time award.} The elements from $K$ are
typically viewed as annotations, or even as abstract tokens, and
the semantics of \ARA operations was originally designed to show
how these annotations are propagated in the results of data
manipulations.  Other applications, apart from provenance, have
been identified from the outset, such as security levels, or
probabilities \cite{provsemirings}.  By doing the present work,
we have understood that \ARA can moreover serve as a
fully-fledged query language for tensors (multidimensional
arrays), and matrices in particular.  This viewpoint is
backed by the recent interest
in processing Functional Aggregate
Queries (FAQ \cite{faq,faq_sigmodrecord}, also known as AJAR
\cite{ajar}).  Indeed, FAQ and AJAR correspond to the
project-join fragment of \ARA, without self-joins.

This paper is further organized as follows.
Section~\ref{secprelim} recalls the
data model of $K$-relations and the associated query language
\ARA{}\@.  Section~\ref{secresult} presents the result on $\ARAm(r+1)$
and $\ARAm(r)$ plus composition.  Section~\ref{sec:matrices}
relates $\ARAm(2)$ plus composition to \ML{}\@.
Section~\ref{seconcl} draws conclusions, discusses related work,
and proposes directions for further research.

\section{Annotated-Relation Algebra} \label{secprelim}

By \emph{function} we will always mean a total function. For a function $f: X \to Y$ and $Z \subseteq X$, the \emph{restriction} of $f$ to $Z$, denoted by $f|_Z$, is the function $Z \to Y$ where $f|_Z(x) = f(x)$ for all $x \in Z$.

Recall that a \emph{semiring} $K$ is a set equipped with two binary operations, addition ($+$) and multiplication ($*$), such that (1) addition is associative, commutative, and has an identity element $0$; (2) multiplication is associative, has an identity element $1$; and has $0$ as an annihilating element; and (3) multiplication distributes over addition. A semiring is called \emph{commutative} when multiplication is commutative. We fix a semiring $K$.

\begin{remark}
We will explicitly indicate where we assume commutativity of $K$.
\end{remark}

From the outset, we also fix countable infinite sets $\mathbf{rel}$, $\mathbf{att}$, and $\mathbf{dom}$, the elements of which are called \emph{relation names}, \emph{attributes}, and \emph{domain elements}, respectively. We assume an equivalence relation $\sim$ on $\mathbf{att}$ that partitions $\mathbf{att}$ into an infinite number of equivalence classes that are each infinite. When $A \sim B$, we say that $A$ and $B$ are \emph{compatible}. Intuitively, $A \sim B$ will mean that $A$ and $B$ have the same set of domain values. A function $f: X \to Y$ with $X$ and $Y$ sets of attributes is called \emph{compatible} if $f(A) \sim A$ for all $A \in X$.

A \emph{relation schema} is a finite subset of $\mathbf{att}$. A \emph{database schema} is a function $\mathcal{S}$ on a finite set $N$ of relation names, assigning a relation schema $\mathcal{S}(R)$ to each $R \in N$.
The \emph{arity} of a relation name $R$ is the cardinality $|\mathcal{S}(R)|$ of its schema. The \emph{arity} of $\mathcal{S}$ is the largest arity among relation names $R \in N$.

We now recursively define the expressions of the
\emph{Annotated-Relation Algebra}, abbreviated by \ARA.
At the same time we assign a relation schema to each \ARA expression by extending $\mathcal{S}$ from relation names to arbitrary \ARA expressions. An \emph{\ARA expression} $e$ over a database schema $\mathcal{S}$ is equal to 
\begin{itemize}
\item a relation name $R$ of $\mathcal{S}$;
\item $\mathbf{1}(e')$, where $e'$ is an \ARA expression, and $\mathcal{S}(e) := \mathcal{S}(e')$;
\item $e_1 \cup e_2$, where $e_1$ and $e_2$ are \ARA expressions with $\mathcal{S}(e_1) = \mathcal{S}(e_2)$, and $\mathcal{S}(e) := \mathcal{S}(e_1)$;
\item $\pi_Y(e')$, where $e'$ is an \ARA expression and $Y \subseteq \mathcal{S}(e')$, and $\mathcal{S}(e) := Y$;
\item $\sigma_{Y}(e')$, where $e'$ is an \ARA expression, $Y \subseteq \mathcal{S}(e')$, the elements of $Y$ are mutually compatible, and $\mathcal{S}(e) := \mathcal{S}(e')$;
\item $\rho_\varphi(e')$, where $e'$ is an \ARA expression and $\varphi: \mathcal{S}(e') \to Y$ a compatible one-to-one correspondence with $Y \subseteq \mathbf{att}$, and $\mathcal{S}(e) := Y$; or
\item $e_1 \Join e_2$, where $e_1$ and $e_2$ are \ARA expressions, and $\mathcal{S}(e) := \mathcal{S}(e_1) \cup \mathcal{S}(e_2)$.
\end{itemize}

A \emph{domain assignment} is a function $D: \mathbf{att} \to
\mathcal{D}$, where $\mathcal{D}$ is a set of nonempty
finite subsets of $\mathbf{dom}$, such that $A \sim B$ implies
$D(A) = D(B)$. Let $X$ be a relation schema. A \emph{tuple} over
$X$ with respect to $D$
is a function $t: X \to \mathbf{dom}$ such that
$t(A) \in D(A)$ for all $A \in X$. We denote by
$\mathcal{T}_D(X)$ the set of tuples over $X$ with respect to $D$. Note that
$\mathcal{T}_D(X)$ is finite.  A \emph{relation} $r$ over
$X$ with respect to $D$ is a function $r:
\mathcal{T}_D(X) \to K$. So a relation annotates every tuple
over $X$ with respect to $D$ with a value from $K$.  If $\mathcal{S}$ is a
database schema, then an \emph{instance $\mathcal{I}$ of
$\mathcal{S}$ with respect to $D$} is a function that assigns to every
relation name $R$ of $\mathcal{S}$ a relation $\mathcal{I}(R):
\mathcal{T}_D(\mathcal{S}(R)) \to K$.

\begin{remark}
In practice, a domain assignment need only be defined on the
attributes that are used in the database schema (and on
attributes compatible to these attributes).  Thus, it can be
finitely specified.  While here we have chosen to keep the notion of domain
assignment and instance separate, it may perhaps be more natural
to think of the domain assignment as being part of the instance.
\end{remark}

\begin{figure}
\begin{center}
$\mathcal{I}(\text{no\texttt{\_}courses}) = $
\begin{tabular}{ |c|c|c| } 
\hline
student & dptm & $K$ \\
\hline
Alice & CS & $5$ \\
Alice & Math & $2$ \\
Alice & Bio & $0$ \\
Bob & CS & $2$ \\
Bob & Math & $1$ \\
Bob & Bio & $3$ \\
\hline
\end{tabular}
\qquad \qquad
$\mathcal{I}(\text{course\texttt{\_}fee}) = $
\begin{tabular}{ |c|c| } 
\hline
dptm & $K$ \\
\hline
CS & $300$ \\
Math & $250$ \\
Bio & $330$ \\
\hline
\end{tabular}
\end{center}
\caption{Example of a database instance.}
\label{fig:database_instance}
\end{figure}

\begin{example}\label{ex:db_instance}
Let us record for a university both the number of courses each student takes in each department and the course fee for each department. Let $K$ be the set of integers and let $\mathcal{S}$ be a
database schema on $\{\text{no\texttt{\_}courses},\allowbreak\text{course\texttt{\_}fee}\}$ with $\mathcal{S}(\text{no\texttt{\_}courses}) = \{\text{student},\text{dptm}\}$ and
$\mathcal{S}(\text{course\texttt{\_}fee}) = \{\text{dptm}\}$.
Let $D$ be a domain assignment with $D(\text{student}) = \{\text{Alice},\text{Bob}\}$ and $D(\text{dptm}) = \{\text{CS},\text{Math},\text{Bio}\}$. A database instance $\mathcal{I}$ of $\mathcal{S}$ with respect to $D$ is shown in Figure~\ref{fig:database_instance}.
\end{example}

We now define the relation $\mathbf{1}_X$, as well as
the generalizations of the classical
operations from the positive relational algebra to work on
$K$-relations.

\begin{description}

\item[One] For every relation schema $X$,
  we define $\mathbf{1}_X: \mathcal{T}_D(X) \to K$ where $\mathbf{1}_X(t) = 1$ for every $t \in \mathcal{T}_D(X)$.

\item[Union] Let $r_1, r_2: \mathcal{T}_D(X) \to K$. Define $r_1 \cup r_2: \mathcal{T}_D(X) \to K$ as $(r_1 \cup r_2)(t) = r_1(t) + r_2(t)$.

\item[Projection] Let $r: \mathcal{T}_D(X) \to K$ and $Y \subseteq X$. Define $\pi_{Y}(r): \mathcal{T}_D(Y) \to K$ as
\[
(\pi_{Y}(r))(t) = \sum_{\substack{t' \in \mathcal{T}_D(X),\\ t'|_{Y} = t}} \!\! r(t').
\]

\item[Selection] Let $r: \mathcal{T}_D(X) \to K$ and $Y \subseteq X$ where the elements of $Y$ are mutually compatible. Define $\sigma_{Y}(r): \mathcal{T}_D(X) \to K$ such that
\[
(\sigma_{Y}(r))(t) =
\begin{cases}
r(t) & \text{if } t(A)=t(B) \text{ for all } A, B \in Y;\cr
0    & \text{otherwise}.
\end{cases}
\]

\item[Renaming] Let $r: \mathcal{T}_D(X) \to K$ and $\varphi: X \to Y$ a compatible one-to-one correspondence. We define $\rho_\varphi(r): \mathcal{T}_D(Y) \to K$ as $\rho_\varphi(r)(t) = r(t \circ \varphi)$.

\item[Join] Let $r_1: \mathcal{T}_D(X_1) \to K$ and $r_2: \mathcal{T}_D(X_2) \to K$. Define $r_1 \Join r_2: \mathcal{T}_D(X_1 \cup X_2) \to K$ as $(r_1 \Join r_2)(t) = r_1(t|_{X_1})*r_2(t|_{X_2})$.
\end{description}

The above operations provide semantics for \ARA in a natural manner. Formally, let $\mathcal{S}$ be a database schema, let $e$ be an \ARA expression over $\mathcal{S}$, and let $\mathcal{I}$ be an instance of $\mathcal{S}$. The \emph{output} relation $e(\mathcal{I})$ of $e$ under $\mathcal{I}$ is defined as follows. If $e = R$ with $R$ a relation name of $\mathcal{S}$, then $e(\mathcal{I}) := \mathcal{I}(R)$. If $e = \mathbf{1}(e')$, then $e(\mathcal{I}) := \mathbf{1}_{\mathcal{S}(e')}$. If $e = e_1 \cup e_2$, then $e(\mathcal{I}) := e_1(\mathcal{I}) \cup e_2(\mathcal{I})$. If $e = \pi_{X}(e')$, then $e(\mathcal{I}) := \pi_{X}(e'(\mathcal{I}))$. If $e = \sigma_{Y}(e')$, then $e(\mathcal{I}) := \sigma_{Y}(e'(\mathcal{I}))$. If $e = \rho_\varphi(e')$, then $e(\mathcal{I}) := \rho_\varphi(e'(\mathcal{I}))$. Finally, if $e = e_1 \Join e_2$, then $e(\mathcal{I}) := e_1(\mathcal{I}) \Join e_2(\mathcal{I})$.

\begin{remark}
The language \ARA is a slight variation of the $K$-annotated relational algebra as originally defined by Green et al.~\cite{provsemirings} to better suit
  our purposes. 

First of all, the original definition does not have a domain
  assignment $D: \mathbf{att} \to \mathcal{D}$ but instead a
  single domain common to all attributes (and it therefore also
  does not have a compatibility relation $\sim$). As such, the
  original definition corresponds to the case where database
  schemas and \ARA expressions use only mutually compatible
  attributes. We need our more general setting when we compare
  \ARA to \ML in Section~\ref{sec:matrices}.

  Also, here, we focus on equality selections, while the original
  paper does not fix the allowed selection predicates.  Finally,
  the original definition assumes zero-relations $\mathbf{0}_X$,
  while we instead use one-relations $\mathbf{1}_X$.

\end{remark}

The following observations, to the effect that some (but not all) classical relational-algebra equivalences carry over to the $K$-annotated setting, were originally made by Green et al.
\begin{proposition}[Proposition~3.4 of \cite{provsemirings}]
  \label{prop:properties_ARA}
The following properties and equivalences hold, where, for each given equivalence, we assume that the left-hand side is well defined.
\begin{itemize}
\item Union is associative and commutative.
\item Join is associative and distributive over union, i.e., $(r_1 \cup r_2) \Join r_3 = (r_1 \Join r_3) \cup (r_2 \Join r_3)$.
\item Any two selections commute.
\item Projection and selection commute when projection retains the attributes on which selection takes place.
\item Projection distributes over union, i.e., $\pi_{Y}(r_1 \cup r_2) = \pi_{Y}(r_1) \cup \pi_{Y}(r_2)$.
\item Selection distributes over union, i.e., $\sigma_{Y}(r_1 \cup r_2) = \sigma_{Y}(r_1) \cup \sigma_{Y}(r_2)$.
\item We have $\sigma_{Y}(r_1) \Join r_2 = \sigma_{Y}(r_1 \Join r_2)$ and $r_1 \Join \sigma_{Y}(r_2) = \sigma_{Y}(r_1 \Join r_2)$.
\item If $K$ is commutative, then join is commutative.
\end{itemize}
\end{proposition}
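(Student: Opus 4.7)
The plan is to verify each of the listed equivalences by pointwise computation: both sides of every claimed equation are $K$-relations over the same schema (thanks to the well-definedness hypothesis), so it suffices to show that, for an arbitrary tuple $t$ in the appropriate $\mathcal{T}_D$, the two values coincide. In every case this reduces to a short manipulation using the semiring axioms, and the proof is essentially a careful bookkeeping exercise.

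First I would dispatch the union-centered items. Associativity and commutativity of $\cup$ are immediate from the corresponding properties of $+$ in $K$. Distribution of $\sigma$ over $\cup$ is transparent since selection either multiplies a tuple's value by $0$ (using the annihilating property) or leaves it alone, and this commutes with pointwise addition. Distribution of $\pi$ over $\cup$ reduces to the identity $\sum_{t'} (r_1(t') + r_2(t')) = \sum_{t'} r_1(t') + \sum_{t'} r_2(t')$ over the fibre $\{t' : t'|_Y = t\}$, which again uses associativity and commutativity of $+$. Commutation of any two selections is a direct case split: both compositions evaluate to $r(t)$ exactly when $t$ satisfies both equality conditions, and to $0$ otherwise.

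Next I would handle the join-related identities. Associativity of $\Join$ reduces to associativity of $*$ combined with the restriction identity $(t|_{X_i \cup X_j})|_{X_i} = t|_{X_i}$. Distributivity of $\Join$ over $\cup$ is a direct application of distributivity of $*$ over $+$ in $K$. For $\sigma_Y(r_1) \Join r_2 = \sigma_Y(r_1 \Join r_2)$, the key observation is that the selection condition is entirely on attributes of $X_1$, so for any $t \in \mathcal{T}_D(X_1 \cup X_2)$ it is decided by $t|_{X_1}$; when the condition fails, the left-hand side is $0 * r_2(t|_{X_2}) = 0$ by the annihilator axiom, matching the right-hand side, and otherwise both sides equal $r_1(t|_{X_1}) * r_2(t|_{X_2})$. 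The symmetric identity is analogous. Commutativity of $\Join$ under the assumption that $K$ is commutative is immediate from $a*b = b*a$ in $K$ (note that no reindexing is needed because $X_1 \cup X_2 = X_2 \cup X_1$).

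The only item requiring slightly more care is the commutation of $\pi_Y$ and $\sigma_Z$ when $Z \subseteq Y$: for $t \in \mathcal{T}_D(Y)$, either $t$ itself satisfies the $Z$-equality condition, in which case every $t' \in \mathcal{T}_D(X)$ with $t'|_Y = t$ also satisfies it (because $Z \subseteq Y$), so both compositions evaluate to $\sum_{t'|_Y = t} r(t')$; or $t$ violates the condition, and then every such $t'$ violates it as well, giving $0$ on both sides. I do not anticipate any genuine obstacle here; the main point to be vigilant about is which semiring axiom is used at each step — in particular the annihilating role of $0$ in the selection-join identity — since, as the authors emphasize, classical relational identities such as idempotence of $\cup$ and $\Join$ do \emph{not} survive in the $K$-annotated setting and cannot be silently invoked.
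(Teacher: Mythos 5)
Your proposal is correct: each identity is indeed verified by the pointwise computation you describe, using only the semiring axioms (associativity/commutativity of $+$, distributivity, and $0$ as annihilator for the selection--join identity), and the one subtlety you flag --- that the selection condition on $Z\subseteq Y$ is decided already by the projected tuple, so it is constant on the whole fibre $\{t': t'|_Y=t\}$ --- is exactly the point that makes the projection/selection commutation work. The paper itself gives no proof, deferring to Proposition~3.4 of Green et al., so your argument supplies the standard verification that reference contains.
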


Note that idempotence of union and of join, i.e., $r \Join r = r
\cup r = r$, which holds for the classical relational algebra,
does \emph{not} in general hold for \ARA.

We supplement Proposition~\ref{prop:properties_ARA} with the following easy-to-verify properties.
\begin{lemma}\label{lem:properties_ARA_extra}
Let $r_1: \mathcal{T}_D(X_1) \to K$ and $r_2: \mathcal{T}_D(X_2) \to K$. 
\begin{itemize}
\item 
If $X_1 \cap X_2 \subseteq X \subseteq X_1 \cup X_2$, then $\pi_X(r_1 \Join r_2) = \pi_{X \cap X_1}(r_1) \Join \pi_{X \cap X_2}(r_2)$. 

\item If $Y_1, Y_2 \subseteq X_1$ where $Y_1 \cap Y_2 \neq \emptyset$ and the attributes of $Y_1$ and of $Y_2$ are mutually compatible, then $\sigma_{Y_2}(\sigma_{Y_1}(r_1)) = \sigma_{Y_1 \cup Y_2}(r_1)$.

\item If $\varphi: X_1 \cup X_2 \to X$ is a compatible one-to-one correspondence, then $\rho_{\varphi}(r_1 \Join r_2) = \rho_{\varphi|_{X_1}}(r_1) \Join \rho_{\varphi|_{X_2}}(r_2)$. If moreover $X_1 = X_2$, then $\rho_{\varphi}(r_1 \cup r_2) = \rho_{\varphi}(r_1) \cup \rho_{\varphi}(r_2)$.

\item If $Y \subseteq X_1$ and $\varphi: X_1 \to X$ is a compatible one-to-one correspondence, then $\rho_{\varphi}(\sigma_{Y}(r_1)) = \sigma_{\varphi(Y)}(\rho_\varphi(r_1))$, where $\varphi(Y) = \{ \varphi(y) \mid y \in Y \}$.
\end{itemize}
\end{lemma}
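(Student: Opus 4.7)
My plan is to prove each of the four bullet points by unfolding the \ARA semantics, evaluating both sides at an arbitrary tuple of the expected schema, and showing the resulting elements of $K$ coincide. Each bullet reduces, modulo some bookkeeping of attribute sets, to a semiring identity.

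The first bullet is the substantive one: the key tool is distributivity of $*$ over $+$. Fix a tuple $t \in \mathcal{T}_D(X)$. The left-hand side is a sum over tuples $t' \in \mathcal{T}_D(X_1 \cup X_2)$ extending $t$, of $r_1(t'|_{X_1}) * r_2(t'|_{X_2})$. The hypothesis $X_1 \cap X_2 \subseteq X \subseteq X_1 \cup X_2$ is exactly what I need so that (i) the values of $t'$ on $X_1 \cap X_2$ are fixed by $t$, and (ii) the ``free'' attributes split disjointly into $(X_1 \setminus X)$ and $(X_2 \setminus X)$. Hence the extensions $t'$ are in bijection with pairs $(u_1, u_2)$ of independent extensions, and distributivity lets me factor the double sum as $\bigl(\sum_{u_1} r_1(\ldots)\bigr) * \bigl(\sum_{u_2} r_2(\ldots)\bigr)$. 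Recognizing each factor as a projection evaluated at $t|_{X \cap X_1}$ or $t|_{X \cap X_2}$ yields the right-hand side. Note that this argument needs two-sided distributivity but not commutativity of $*$.

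The second bullet is purely combinatorial. Both $\sigma_{Y_2}(\sigma_{Y_1}(r_1))(t)$ and $\sigma_{Y_1 \cup Y_2}(r_1)(t)$ equal either $r_1(t)$ or $0$, and I only need to check that ``$t$ is constant on $Y_1$ and on $Y_2$'' is equivalent to ``$t$ is constant on $Y_1 \cup Y_2$''. The $\Leftarrow$ direction is immediate, and the $\Rightarrow$ direction uses the hypothesis $Y_1 \cap Y_2 \neq \emptyset$: pick any $A \in Y_1 \cap Y_2$; then every attribute in $Y_1 \cup Y_2$ takes the value $t(A)$.

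The third and fourth bullets are just chasing the renaming through composition with $\varphi$. For the third, evaluate at $t \in \mathcal{T}_D(X)$: $\rho_\varphi(r_1 \Join r_2)(t) = r_1((t\circ\varphi)|_{X_1}) * r_2((t\circ\varphi)|_{X_2})$, and, using that $\varphi(X_1) \cup \varphi(X_2) = X$ so the schema matches, the right-hand side expands to $r_1(t|_{\varphi(X_1)} \circ \varphi|_{X_1}) * r_2(t|_{\varphi(X_2)} \circ \varphi|_{X_2})$; the two $r_i$-arguments coincide as functions on $X_i$. The $\cup$-case when $X_1 = X_2$ is an immediate unfolding. For the fourth, observe that $t$ being constant on $\varphi(Y)$ is the same, via the bijection $\varphi$, as $t \circ \varphi$ being constant on $Y$; both sides then evaluate to $r_1(t \circ \varphi)$ on the ``constant'' case and to $0$ otherwise.

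I do not expect a serious obstacle here; the only non-trivial step is the sum factorization in the first bullet, and the only pitfall is to be careful with schemas (verifying that both sides of each identity are well-defined and have matching relation schemas before comparing their values tuple-wise).
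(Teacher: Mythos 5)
Your proposal is correct. The paper itself gives no proof of this lemma (it is stated as ``easy-to-verify''), and your pointwise verification is exactly the intended argument: the only substantive point is the first item, where your observation that $X_1\cap X_2\subseteq X$ forces the free attributes to split disjointly into $X_1\setminus X$ and $X_2\setminus X$ (so the sum factorizes by two-sided distributivity, with no commutativity needed) is the right one, and the remaining items are indeed routine unfoldings once well-definedness of the schemas is checked.
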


We also use the operation of
projecting away an attribute, i.e., $\hat\pi_A(e) :=
\pi_{\mathcal{S}(e) \setminus \{A\}}(e)$ if $A \in
\mathcal{S}(e)$. Note that conversely, $\pi_X(e) = (\hat\pi_{A_m}
\cdots \hat\pi_{A_1})(e)$ where $X = \mathcal{S}(e) \setminus
\{A_1,\ldots,A_m\}$ and the $A_i$'s are mutually distinct.
Projecting away, allowing one to deal with one attribute
at a time, is sometimes notationally more convenient.

\section{Composition and Equivalence} \label{secresult}

In this section we define an operation called $k$-composition and show that augmenting \ARA by composition allows one to reduce the required arity of the relations that are computed in subexpressions.

\begin{definition}\label{def:composition}
Let $k$ be a nonnegative integer and let $l \in \{1, \ldots, k\}$. Let $r_i: \mathcal{T}_D(X_i) \to K$ for $i \in \{1, \ldots, l\}$, let $X = X_1 \cup \cdots \cup X_l$, and let $A \in X_1 \cap \cdots \cap X_l$.

Define the \emph{$k$-composition} $\zeta_{A,k}(r_1,\ldots,r_l): \mathcal{T}_D(X \setminus \{A\}) \to K$ as
\[
(\zeta_{A,k}(r_1,\ldots,r_l))(t) = (\hat\pi_{A}(r_1 \Join \cdots \Join r_l))(t)
\]
for all $t \in \mathcal{T}_D(X \setminus \{A\})$. 
\end{definition}

Note that $\zeta_{A,k}$ takes at most $k$ arguments.

We denote by $\ARAm+\zeta_k$ the language obtained by extending \ARA with $k$-composition. Consequently, if $e_1,\ldots,e_l$ are $\ARAm+\zeta_k$ expressions with $l \leq k$ and $A \in \mathcal{S}(e_1) \cap \cdots \cap \mathcal{S}(e_l)$, then $e = \zeta_{A,k}(e_1,\ldots,e_l)$ is an $\ARAm+\zeta_k$ expression. Also, we let $\mathcal{S}(e) := (\mathcal{S}(e_1) \cup \cdots \cup \mathcal{S}(e_l)) \setminus \{A\}$.

Let $k$ be a nonnegative integer. We denote by $\ARAm(k)$ the
fragment of \ARA in which the database schemas are restricted to
arity at most $k$ and the relation schema of each subexpression
is of cardinality at most $k$.  In particular, join $e_1 \Join e_2$ is
only allowed if $|\mathcal{S}(e_1 \Join e_2)| \leq k$. The
fragment \ARAC is defined similarly.

From Definition~\ref{def:composition} it is apparent that \ARAC
is subsumed by $\ARAm(k+1)$. One of our main results
(Corollary~\ref{cor:ARA_kp1+k})
provides the converse
inclusion, when the database schemas and outputs are restricted
to arity at most $k$. To this end, we establish a normal form for
\ARA expressions. First we prove the following technical but
important lemma.  

\begin{lemma}\label{lem:proj_sel}
Let $r_1, \ldots, r_n$ be relations with relation schemas $X_1,\ldots, X_n$, respectively, and with respect to a domain assignment $D$. Assume that $A, B \in X_1 \cup \cdots \cup X_n$ are distinct and compatible. Define, for $i \in \{1,\ldots,n\}$,
\[
r'_i :=
\begin{cases}
r_i & \text{if } A \notin X_i; \cr
\rho_{A \to B}(r_i) & \text{if } A \in X_i, B \notin X_i; \cr
\hat\pi_A(\sigma_{\{A,B\}}(r_i)) & \text{if } A,B \in X_i,
\end{cases}
\]
where $A \to B$ denotes the one-to-one correspondence from $X_i$ to $(X_i \setminus \{A\}) \cup \{B\}$ that assigns $A$ to $B$ and keeps the remaining attributes fixed. 

Then
\[
\hat\pi_A(\sigma_{\{A,B\}}(r_1 \Join \cdots \Join r_n)) = r'_1 \Join \cdots \Join r'_n.
\]
\end{lemma}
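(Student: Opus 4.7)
The cleanest approach is a direct pointwise computation. Set $X := X_1 \cup \cdots \cup X_n$; since $A, B \in X$ and $A \neq B$, the selection $\sigma_{\{A,B\}}$ applied to the join is well-defined, and for any $t \in \mathcal{T}_D(X \setminus \{A\})$ the value $t(B)$ is determined. Unfolding $\hat\pi_A(\sigma_{\{A,B\}}(\cdots))$ gives
\[
\sum_{s \in \mathcal{T}_D(X),\ s|_{X\setminus\{A\}} = t} \sigma_{\{A,B\}}(r_1 \Join \cdots \Join r_n)(s).
\]
Every such $s$ satisfies $s(B) = t(B)$, so the selection filters out all $s$ except the unique tuple $s^\star$ with $s^\star|_{X\setminus\{A\}} = t$ and $s^\star(A) = t(B)$. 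Hence the left-hand side equals $(r_1 \Join \cdots \Join r_n)(s^\star) = \prod_i r_i(s^\star|_{X_i})$.

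For the right-hand side, let $X'_i$ denote the schema of $r'_i$. Inspecting the three cases, $X'_i$ equals $X_i$, $(X_i \setminus \{A\}) \cup \{B\}$, or $X_i \setminus \{A\}$, respectively; none contains $A$, each is a subset of $X \setminus \{A\}$, and because $B \in X$ some $X'_i$ contributes $B$ to the union, so $\bigcup_i X'_i = X \setminus \{A\}$. Thus $(r'_1 \Join \cdots \Join r'_n)(t) = \prod_i r'_i(t|_{X'_i})$, and it suffices to show $r'_i(t|_{X'_i}) = r_i(s^\star|_{X_i})$ in each case. For Type~0 ($A \notin X_i$) it is immediate since $s^\star$ and $t$ agree on $X_i$. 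For Type~1 ($A \in X_i$, $B \notin X_i$), unfolding $\rho_{A \to B}$ shows $r'_i(t|_{X'_i}) = r_i(u)$ where $u(A) = t(B)$ and $u$ agrees with $t$ elsewhere on $X_i \setminus \{A\}$, matching $s^\star|_{X_i}$. For Type~2 ($A, B \in X_i$), the same argument as for the LHS (selection plus project-away $A$) yields a unique surviving tuple with value $t(B)$ at both $A$ and $B$, which is again $s^\star|_{X_i}$. Taking the product over $i$ produces $\prod_i r_i(s^\star|_{X_i})$, matching the left-hand side.

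There is no serious obstacle; the only care required is the schema bookkeeping, both to verify $\bigcup_i X'_i = X \setminus \{A\}$ (so the two sides live on the same schema) and to confirm in each of the three cases that the unfolded value of $r'_i$ really is the restriction of $s^\star$ to $X_i$. An alternative would be an algebraic derivation using Proposition~\ref{prop:properties_ARA} and Lemma~\ref{lem:properties_ARA_extra} to push the selection and projection through the join factor by factor, but because Type~1 factors require a renaming rather than a selection, one would still end up splitting into the same three cases, so the pointwise argument seems preferable.
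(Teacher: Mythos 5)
Your proof is correct and follows essentially the same route as the paper's: both reduce the left-hand side to the value of the join at the unique surviving tuple (your $s^\star$ is the paper's $\tilde t = t \cup \{(A,t(B))\}$) and then verify $r'_i(t|_{X'_i}) = r_i(s^\star|_{X_i})$ in the same three cases, reusing the collapsed-sum observation for the Type~2 factors. The only cosmetic difference is that the paper isolates that observation as a displayed identity before applying it, while you inline it.
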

\begin{proof}
Let $X$ be a finite set of attributes with $A, B \in X$ distinct and compatible. Let $r: \mathcal{T}_D(X) \to K$ be a relation and $t \in \mathcal{T}_D(X \setminus \{A\})$.

We have 
\begin{align}
(\hat\pi_A(\sigma_{\{A,B\}}(r)))(t) \ = \!\!\!\!
\sum_{\substack{u \in \mathcal{T}_D(X),\\ u|_{X\setminus \{A\}} = t}} \!\!\!\! (\sigma_{A=B}(r))(u) \ 
= \!\!\!\! \sum_{\substack{u \in \mathcal{T}_D(X),\\ u|_{X\setminus \{A\}} = t,\\ u(A) = u(B)}} \!\!\!\! r(u) \ \,
= \ \, r(\tilde t) \label{form:proj_sel}
\end{align}
where $\tilde t \in \mathcal{T}_D(X)$ is $\tilde t = t \cup \{(A,t(B))\}$. Thus, $\tilde t$ is obtained from $t$ by adding attribute $A$ with value $t(B)$. Indeed, the last summation of (\ref{form:proj_sel}) is over a single tuple $u$, namely $u = \tilde t$.

In particular, applying (\ref{form:proj_sel}) to $r_1 \Join \cdots \Join r_n$, we obtain
\[
(\hat\pi_A(\sigma_{\{A,B\}}(r_1 \Join \cdots \Join r_n)))(t) =
(r_1 \Join \cdots \Join r_n)(\tilde t) = r_1(\tilde t|_{X_1}) * \cdots * r_n(\tilde t|_{X_n}).
\]
Denote the schemas of the relations $r'_1, \ldots, r'_n$ by $X'_1,\ldots, X'_n$, respectively. Let $i \in \{1,\ldots,n\}$. We distinguish three cases.
\begin{itemize}
\item If $A \notin X_i$, then $\tilde t|_{X_i} = t|_{X_i}$. Hence $r_i(\tilde t|_{X_i}) = r'_i(t|_{X'_i})$. 

\item If $A \in X_i$ and $B \notin X_i$, then $\tilde t|_{X_i} = t|_{(X_i \setminus \{A\}) \cup \{B\}} \circ \varphi = t|_{X'_i} \circ \varphi$ with $\varphi = A \to B$. Hence, $r_i(\tilde t|_{X_i}) = (\rho_{A \to B}(r_i))(t|_{X'_i}) = r'_i(t|_{X'_i})$.

\item If $A,B \in X_i$, then, by (\ref{form:proj_sel}) but now applied to $r_i$ and $t|_{X_i \setminus \{A\}}$, we have $r_i(\tilde t|_{X_i}) = (\hat\pi_A(\sigma_{\{A,B\}}(r_i)))(t|_{X_i \setminus \{A\}}) = r'_i(t|_{X'_i})$.
\end{itemize}
Therefore, in each case we obtain $r_i(\tilde t|_{X_i}) = r'_i(t|_{X'_i})$. Consequently,
\[
r_1(\tilde t|_{X_1}) * \cdots * r_n(\tilde t|_{X_n}) = r'_1(t|_{X'_1}) * \cdots * r'_n(t|_{X'_n}) = (r'_1 \Join \cdots \Join r'_n)(t).\qedhere
\]
\end{proof}

We use the following terminology.  Let $\mathcal{F}$ be any
family of expressions. A \emph{selection of
$\mathcal{F}$-expressions} is an expression of the form
$\sigma_{Y_n} \cdots \sigma_{Y_1}(f)$, where $f$ is an
$\mathcal{F}$-expression and $n \geq 0$. Note the slight abuse of
terminology as we allow multiple selection operations. Also, when
we say that $e$ is a \emph{union of $\mathcal{F}$-expressions} or
a \emph{join of $\mathcal{F}$-expressions}, we allow $e$ to be
just a single expression in $\mathcal{F}$ (so union and join may
be skipped). 

We are now ready to formulate and prove a main result of this paper. This result is inspired by a result in \cite[Theorem 3.4.5, Claim 2]{marxvenema_multi} which provides a proof of the classic equivalence of $\mathrm{FO}(3)$ and the algebra of binary relations.

Two \ARA expressions $e_1$ and $e_2$ over the same database
schema are called \emph{equivalent}, naturally, if they yield the
same output relation, for every domain assignment and every
database instance respecting that domain assignment.

\begin{theorem}\label{thm:ara_equiv_bara}
Let $\mathcal{S}$ be a database schema of arity at most $k$ and assume that $K$ is commutative. Every $\ARAm(k+1)$ expression over $\mathcal{S}$ is equivalent to a union of selections of joins of \ARAC expressions over $\mathcal{S}$.
\end{theorem}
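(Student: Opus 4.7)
The plan is to proceed by structural induction on the $\ARAm(k+1)$ expression $e$, showing that each such $e$ is equivalent to a union of selections of joins of \ARAC expressions. The base cases are immediate: a relation name $R$ from $\mathcal{S}$ is already an \ARAC expression since $\mathcal{S}$ has arity at most $k$. For $e = \mathbf{1}(e')$, if $|\mathcal{S}(e')| \leq k$ we are done; otherwise apply the induction hypothesis to $e'$, extract any join $J = r_1 \Join \cdots \Join r_m$ of \ARAC expressions from the resulting normal form, and observe that $\mathbf{1}(e') \equiv \mathbf{1}(r_1) \Join \cdots \Join \mathbf{1}(r_m)$, which is again a join of \ARAC expressions. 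For $e_1 \cup e_2$ the two normal forms combine directly. For $\rho_\varphi(e')$, Lemma~\ref{lem:properties_ARA_extra} pushes renaming through union, past selections (with index sets rewritten by $\varphi$), and into each \ARAC factor of the inner joins. For $e_1 \Join e_2$, distribute join over the outer unions and use Proposition~\ref{prop:properties_ARA} to pull every selection out of the resulting join, yielding a single selection over a join of \ARAC expressions. For $\sigma_Y(e')$, push the selection through the union and stack it on top of the existing selections.

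The main work is the projection case $e = \pi_Y(e')$. Write $\pi_Y = \hat\pi_{A_m} \cdots \hat\pi_{A_1}$ and distribute each $\hat\pi_A$ across the outer union, reducing to the claim that whenever $F = \sigma_{Z_n} \cdots \sigma_{Z_1}(J)$ with $J = r_1 \Join \cdots \Join r_m$ a join of \ARAC expressions, $\hat\pi_A(F)$ is itself a selection of a join of \ARAC expressions. Using Proposition~\ref{prop:properties_ARA} to commute selections and Lemma~\ref{lem:properties_ARA_extra} to merge overlapping ones, the selections are regrouped as $\sigma_{Z^*_1}\cdots\sigma_{Z^*_p}$ indexed by disjoint equivalence classes (singleton classes being dropped, as such selections are identities), at most one of which, say $Z^*_1$, contains $A$. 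Two sub-cases remain.

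Sub-case A, $A \in Z^*_1$: pick $B \in Z^*_1 \setminus \{A\}$, split $\sigma_{Z^*_1}$ as $\sigma_{\{A,B\}}\,\sigma_{Z^*_1 \setminus \{A\}}$, and commute $\hat\pi_A$ past all the $A$-free selections so that only $\sigma_{\{A,B\}}$ remains directly inside $\hat\pi_A$. Lemma~\ref{lem:proj_sel} then rewrites $\hat\pi_A(\sigma_{\{A,B\}}(r_1 \Join \cdots \Join r_m))$ as $r'_1 \Join \cdots \Join r'_m$, where each $r'_i$ is one of $r_i$, $\rho_{A \to B}(r_i)$, or $\hat\pi_A(\sigma_{\{A,B\}}(r_i))$. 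Arity never grows, so each $r'_i$ is an \ARAC expression and the whole expression is back in normal form.

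Sub-case B, $A$ is in no selection: commute $\hat\pi_A$ inside past all selections, reducing to $\hat\pi_A(J)$. Let $r_{i_1}, \ldots, r_{i_l}$ be the factors of $J$ whose schema contains $A$; the remaining factors commute outside the projection. If $l \leq k$ one directly applies $\zeta_{A,k}$; the delicate sub-case is $l > k$, where $\zeta_{A,k}$ alone does not suffice. Here one exploits the crucial arity bound $|\mathcal{S}(J)| = |\mathcal{S}(e')| \leq k+1$ coming from $e \in \ARAm(k+1)$: assuming $|\mathcal{S}(J)| = k+1$ (otherwise all $A$-containing factors collapse into a single \ARAC directly), for each factor $r_{i_j}$ the set $\mathcal{S}(J) \setminus X_{i_j}$ is nonempty and avoids $A$, so one picks a witness $c_{i_j}$ in it and groups the factors by this choice. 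Each block's combined schema omits its common missing attribute, hence has arity at most $k$, and joins to an \ARAC expression. Since the possible witnesses lie in $\mathcal{S}(J) \setminus \{A\}$, of size at most $k$, there are at most $k$ blocks, so $\zeta_{A,k}$ applies to the block-joins to realise $\hat\pi_A(r_{i_1} \Join \cdots \Join r_{i_l})$ as an \ARAC expression. The reordering of factors required to respect this partition is where the hypothesis that $K$ is commutative enters (so that join is commutative, via Proposition~\ref{prop:properties_ARA}), and I expect this grouping step to be the main obstacle; the outer induction on the number of projected attributes then finishes the argument.
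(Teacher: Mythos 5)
Your proposal is correct and follows essentially the same route as the paper: structural induction with the easy cases handled by distributivity/commutation, the projection case split according to whether the projected attribute occurs in a selection (handled via Lemma~\ref{lem:proj_sel}) or not (handled by grouping the $A$-containing join factors and applying $\zeta_{A,k}$, using commutativity of join). Your grouping by a ``witness'' attribute missing from each factor's schema is exactly the paper's grouping by $k$-element subsets of $\mathcal{S}(e')$ containing $A$, since such subsets correspond bijectively to their omitted element.
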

\begin{proof}
For brevity, if an expression $e$ is a union of selections of joins of \ARAC expressions over $\mathcal{S}$, then we say that $e$ is in \emph{normal form}.

The proof is by induction on the structure of $e$. 

\emph{Relation names.} Since relation names $R$ of $\mathcal{S}$ are of arity at most $k$, we have that $R$ is an \ARAC expression over $\mathcal{S}$.

\emph{One.} If $e$ is equivalent to a union of selections of joins of \ARAC expressions $e_1, \ldots, e_n$ over $\mathcal{S}$, then $\mathbf{1}(e) \equiv \mathbf{1}(e_1 \Join \cdots \Join e_n) \equiv \mathbf{1}(e_1) \Join \cdots \Join \mathbf{1}(e_n)$ and so $\mathbf{1}(e)$ is also equivalent to an expression in normal form.

\emph{Union.} If both $e_1$ and $e_2$ are equivalent to expressions in normal form, then so is $e_1 \cup e_2$.

\emph{Join.} Since join distributes over union and since $\sigma_{Y}(e_1) \Join e_2$ and $e_1 \Join \sigma_{Y}(e_2)$ are equivalent to $\sigma_{Y}(e_1 \Join e_2)$ (Proposition~\ref{prop:properties_ARA}), we observe that if $e_1$ and $e_2$ are equivalent to expressions in normal form, then so is $e_1 \Join e_2$.

Let $e$ be equivalent to an expression in normal form.

\emph{Selection.} Since selection distributes over union (Proposition~\ref{prop:properties_ARA}), $\sigma_{Y}(e)$ is also equivalent to an expression in normal form.

\emph{Renaming.} Since renaming distributes over union and join and by the commutative property of renaming and selection (Lemma~\ref{lem:properties_ARA_extra}), $\rho_{\varphi}(e)$ is also equivalent to an expression in normal form.

\emph{Projection.} In this case we additionally require that $|\mathcal{S}(e)| \leq k+1$ (which holds when $e$ is an $\ARAm(k+1)$ expression). Since projection distributes over union, it suffices to assume that $e = \sigma_{Y_m}\cdots\sigma_{Y_1}(e')$, where $e'$ is a join of \ARAC expressions. By Proposition~\ref{prop:properties_ARA} and Lemma~\ref{lem:properties_ARA_extra} we may assume that the $Y_i$'s are mutually disjoint. We may also assume that the $Y_i$'s are all of cardinality at least $2$ (since $\sigma_{Y}$ on relations is the identity when $|Y| \leq 1$).
Let $A \in \mathcal{S}(e)$. We prove that $\hat\pi_{A}(e)$ is a selection of a join of \ARAC expressions. 

If $A$ does not belong to any of the $Y_i$'s, then $\hat\pi_{A}(e) = \sigma_{Y_m}\cdots\sigma_{Y_1}(\hat\pi_{A}(e'))$ by Proposition~\ref{prop:properties_ARA}. If $A \in Y_i$ for some $i$, then this $i$ is unique since the $Y_i$'s are mutually disjoint. Since any two selections commute, we may assume that $A \in Y_1$. As before, $\hat\pi_{A}(e) = \sigma_{Y_m}\cdots\sigma_{Y_2}(\hat\pi_{A}(\sigma_{Y_1}(e')))$. Since $Y_1$ is of cardinality at least $2$, there exists $B \in Y_1$ distinct from $A$. We have $\sigma_{Y_1}(e') \equiv \sigma_{Y_1 \setminus \{A\}}(\sigma_{\{A,B\}}(e'))$. Therefore $\hat\pi_{A}(\sigma_{Y_1}(e')) \equiv \sigma_{Y_1 \setminus \{A\}}(\hat\pi_{A}(\sigma_{\{A,B\}}(e')))$. By Lemma~\ref{lem:proj_sel}, we obtain that $\hat\pi_{A}(\sigma_{\{A,B\}}(e'))$ is equivalent to a join of \ARAC expressions. So, this case is settled.

It remains to show that $\hat\pi_{A}(e')$ is a join of \ARAC expressions. If $|\mathcal{S}(e')| \leq k$, then $e'$ itself is an \ARAC expression and so is $\hat\pi_{A}(e')$. So, assume that $|\mathcal{S}(e')| = k+1$.

Since join is commutative (because $K$ is) and associative, we can regard $e'$ as a join of a \emph{multiset} $F$ of \ARAC expressions. By Lemma~\ref{lem:properties_ARA_extra}, for expressions $e_1$ and $e_2$, if $A \notin \mathcal{S}(e_1)$, then $\hat\pi_{A}(e_1 \Join e_2) \equiv e_1 \Join \hat\pi_{A}(e_2)$. Therefore, we may assume that for every $f \in F$, we have $A \in \mathcal{S}(f)$. Hence, with $\mathcal{P}$ the set of all $k$-element subsets of $\mathcal{S}(e')$ containing $A$, there exists a function $p$ that assigns to each $f \in F$ a set $S \in \mathcal{P}$ with $\mathcal{S}(f) \subseteq S$. Let $\mathcal{R}$ be the range of $p$. Thus $|\mathcal{R}| \leq |\mathcal{P}| = k$. Let, for $S \in \mathcal{R}$, $e_S := \, \Join_{f \in F, p(f) = S} f$. Note that each $e_S$ is an \ARAC expression since $|\mathcal{S}(e_S)| \leq |S| = k$. We have $e' \equiv \,\, \Join_{S \in \mathcal{R}} e_S$ whence $\hat \pi_A(e') \equiv \hat \pi_A(\Join_{S \in \mathcal{R}} e_S)$, which coincides with $\zeta_{A,k}((e_S)_{S \in \mathcal{R}})$ by Definition~\ref{def:composition}. We thus obtain an \ARAC expression as desired.
\end{proof}

\begin{example}
Assume that $K$ is commutative and consider the $\ARAm(3)$ expression
\[
e = \pi_{\{B,C\}}(\sigma_{\{B,C\}}(R \Join R \Join S \Join T \Join \rho_\varphi(T)) \cup \sigma_{\{A,B\}}(R \Join S \Join T)),
\]
where $\mathcal{S}(R) = \{A,B\}$, $\mathcal{S}(S) = \{B,C\}$, $\mathcal{S}(T) = \{A,C\}$ ($A,B,C$ are mutually distinct), and $\varphi$ sends $A$ to $B$ and $C$ to itself. The proof of Theorem~\ref{thm:ara_equiv_bara} obtains an equivalent expression in normal form as follows.
\begin{align*}
e &= \hat\pi_{A}(\sigma_{\{B,C\}}(R \Join R \Join S \Join T \Join \rho_\varphi(T)) \cup \sigma_{\{A,B\}}(R \Join S \Join T)) \cr
&\equiv \hat\pi_{A}(\sigma_{\{B,C\}}(R \Join R \Join S \Join T \Join \rho_\varphi(T))) \cup \hat\pi_{A}(\sigma_{\{A,B\}}(R \Join S \Join T)) \cr
&\equiv \sigma_{\{B,C\}}(\hat\pi_{A}(R \Join R \Join S \Join T \Join \rho_\varphi(T))) \cup \hat\pi_{A}(\sigma_{\{A,B\}}(R \Join S \Join T)) \cr
&\equiv \sigma_{\{B,C\}}(S \Join \rho_\varphi(T) \Join \hat\pi_{A}(R \Join R \Join T)) \cup \hat\pi_{A}(\sigma_{\{A,B\}}(R \Join S \Join T)) \cr
&\equiv \sigma_{\{B,C\}}(S \Join \rho_\varphi(T) \Join \zeta_{A,2}(R \Join R, T)) \cup \hat\pi_{A}(\sigma_{\{A,B\}}(R \Join S \Join T)) \cr
&\equiv \sigma_{\{B,C\}}(S \Join \rho_\varphi(T) \Join \zeta_{A,2}(R \Join R, T)) \cup \bigl(\hat\pi_{A}(\sigma_{\{A,B\}}(R)) \Join S \Join \rho_{A \to B}(T)\bigr).
\end{align*}
The last expression is in the normal form since $S$, $\rho_\varphi(T)$, $\zeta_{A,2}(R \Join R, T)$, $\hat\pi_{A}(\sigma_{\{A,B\}}(R))$, and $\rho_{A \to B}(T)$ are all \ARACTWO expressions.
\end{example}

Note that we likely cannot omit the ``selections of'' in
the above theorem. For example, for $k=2$ consider
$\sigma_{\{A,C\}}(R \Join S)$ where $R$ and $S$ are relation
names with $\mathcal{S}(R) = \{A,B\}$ and $\mathcal{S}(S) =
\{B,C\}$.

\begin{remark}
Theorem~\ref{thm:ara_equiv_bara} still holds if the $\mathbf{1}$ operator is omitted from the definition of \ARA. Indeed, in the proof we can simply omit the case for the $\mathbf{1}$ operator, which is not used anywhere else.
\end{remark}

Since union, selection, and join do not decrease the number of
attributes of relations, we have the following corollary to
Theorem~\ref{thm:ara_equiv_bara}, which establishes the main
result announced in the Introduction.
\begin{corollary}\label{cor:ARA_kp1+k}
Let $\mathcal{S}$ be a database schema of arity at most $k$ and assume that $K$ commutative. Every $\ARAm(k+1)$ expression $e$ over $\mathcal{S}$ with $|\mathcal{S}(e)| \leq k$ is equivalent to an \ARAC expression over $\mathcal{S}$.
\end{corollary}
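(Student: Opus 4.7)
The plan is to derive the corollary as an essentially immediate consequence of Theorem~\ref{thm:ara_equiv_bara}, by tracking how the arity bound on the \emph{output} of $e$ propagates to every intermediate relation schema in the normal form produced by that theorem.

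The first step is to apply Theorem~\ref{thm:ara_equiv_bara} to replace $e$ by an equivalent expression $e'$ in normal form, i.e., a union of selections of joins of \ARAC expressions. Explicitly, $e'$ has the shape $\bigcup_{i} \sigma_{Y_{i,1}} \cdots \sigma_{Y_{i,n_i}}(f_{i,1} \Join \cdots \Join f_{i,m_i})$, where each $f_{i,j}$ is already an \ARAC expression over $\mathcal{S}$. The second step is the schema-tracking observation flagged in the paragraph preceding the corollary: of the three operations used to assemble $e'$ from the $f_{i,j}$, union and selection preserve the schema while join only enlarges it, so none of them ever decreases the number of attributes. Combined with $|\mathcal{S}(e')| = |\mathcal{S}(e)| \le k$, this forces every subexpression of $e'$ built above the $f_{i,j}$ to have schema of cardinality at most $k$: each summand of the outer union has schema $\mathcal{S}(e)$, each inner selection has the same schema as its argument join, each join $f_{i,1} \Join \cdots \Join f_{i,m_i}$ therefore has arity at most $k$, and hence each individual $f_{i,j}$, whose schema is a subset of that of the join, likewise has arity at most $k$.

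The third and final step is to assemble the conclusion. Since each $f_{i,j}$ is an \ARAC expression, its internal subexpressions already satisfy the arity bound $\le k$. Combined with the schema-tracking observation above, every subexpression of $e'$ — internal to some $f_{i,j}$ or built by union, selection, or join above the $f_{i,j}$ — has schema of cardinality at most $k$. Hence $e'$ itself is a well-formed \ARAC expression over $\mathcal{S}$, and it is equivalent to $e$ by construction.

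I do not expect a genuine obstacle here: all of the real combinatorial work lives inside Theorem~\ref{thm:ara_equiv_bara}, and the only thing one must verify independently is the monotonicity of schemas under union, selection, and join, which is immediate from the definitions in Section~\ref{secprelim}. The one point to be careful about in the write-up is to state explicitly that the arity bound propagates \emph{downward} through joins (from the result to its operands), since it is this direction — rather than the upward closure one might expect — that is needed to keep each $f_{i,j}$ within \ARAC.
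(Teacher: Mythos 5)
Your proposal is correct and follows exactly the paper's own (one-sentence) justification: apply Theorem~\ref{thm:ara_equiv_bara} and observe that union, selection, and join never decrease the number of attributes, so the bound $|\mathcal{S}(e)|\le k$ forces every join in the normal form to have schema of cardinality at most $k$, making the whole normal-form expression an \ARAC expression. The only redundancy is your final remark about the $f_{i,j}$ themselves, whose schemas are already bounded by $k$ simply because the theorem delivers them as \ARAC expressions.
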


\begin{remark}
We remark that transforming an expression into the normal form of
Theorem~\ref{thm:ara_equiv_bara} may lead to an exponential
increase in expression length. The reason is that the proof uses
distributivity of join over union.
Indeed, each time we replace an expression of the form
$(e_1 \cup e_2) \Join e_3$ by $(e_1 \Join e_3) \cup (e_2 \Join
e_3)$ there is a duplication of $e_3$. The proof of the classic
translation of $\mathrm{FO}(3)$ to the algebra of binary
relations also induces an exponential increase of expression
length for similar reasons. A proof that this blowup is
unavoidable remains open, both for our result and for
the classical result (to the best of our knowledge).
\end{remark}

\section{Matrices} \label{sec:matrices}

In this section we show that \ARACTWO is equivalent to a natural
version of \ML \cite{matlang_icdt}. As a consequence of
Corollary~\ref{cor:ARA_kp1+k}, we then obtain that also
$\ARAm(3)$, with database schemas and output relations restricted
to arity at most $2$, is equivalent to \ML.  We begin by
recalling the definition of this language.

\subsection{\ML}
Let us fix the countable infinite sets $\mathbf{matvar}$ and $\mathbf{size}$, where the latter has a distinguished element $1 \in \mathbf{size}$. The elements of $\mathbf{matvar}$ are called \emph{matrix variables} and the elements of $\mathbf{size}$ are called \emph{size terms}. 

A \emph{matrix schema} is a function $\mathcal{S}: V \to \mathbf{size} \times \mathbf{size}$ with $V \subseteq \mathbf{matvar}$ both finite and nonempty. We write $(\alpha,\beta) \in \mathbf{size} \times \mathbf{size}$ also as $\alpha \times \beta$.

\ML expressions are recursively defined as follows. At the same time we assign a matrix schema to each \ML expression by extending $\mathcal{S}$ from matrix variables to arbitrary \ML expressions. 

A \emph{\ML expression} $e$ over a matrix schema $\mathcal{S}$ is equal to
\begin{description}
\item[Variable] a matrix variable $M$ of $\mathcal{S}$;

\item[Transpose] $(e')^T$, where $e'$ is a \ML expression, and $\mathcal{S}(e) := \beta \times \alpha$ if $\mathcal{S}(e') = \alpha \times \beta$;

\item[One-vector] $\mathbf{1}(e')$, where $e'$ is a \ML expression, and $\mathcal{S}(e) := \alpha \times 1$ if $\mathcal{S}(e') = \alpha \times \beta$;

\item[Diagonalization] $\mathrm{diag}(e')$, where $e'$ is a \ML expression with $\mathcal{S}(e') = \alpha \times 1$, and $\mathcal{S}(e) := \alpha \times \alpha$;

\item[Multiplication] $e_1 \cdot e_2$, where $e_1$ and $e_2$ are \ML expressions with $\mathcal{S}(e_1) = \alpha \times \beta$ and $\mathcal{S}(e_2) = \beta \times \gamma$, and $\mathcal{S}(e) := \alpha \times \gamma$;

\item[Addition] $e_1 + e_2$, where $e_1$ and $e_2$ are \ML expressions with $\mathcal{S}(e_1) = \mathcal{S}(e_2)$, and $\mathcal{S}(e) := \mathcal{S}(e_1)$; or

\item[Hadamard product] $e_1 \circ e_2$, where $e_1$ and $e_2$ are \ML expressions with $\mathcal{S}(e_1) = \mathcal{S}(e_2)$, and $\mathcal{S}(e) := \mathcal{S}(e_1)$.
\end{description}

A \emph{size assignment} is a function $\sigma$ that assigns to
each size term a strictly positive integer with $\sigma(1) = 1$.
Let $\mathcal{M}$ be the set of all matrices over $K$. We say
that $M \in \mathcal{M}$ \emph{conforms} to $\alpha \times \beta
\in \mathbf{size} \times \mathbf{size}$ by $\sigma$ if $M$ is a
$\sigma(\alpha) \times \sigma(\beta)$-matrix.

If $\mathcal{S}: V \to \mathbf{size} \times \mathbf{size}$ is a
matrix schema, then an \emph{instance of $\mathcal S$ with
respect to $\sigma$} is a function $\mathcal{I}: V \to
\mathcal{M}$ such that, for each $M \in V$, the matrix
$\mathcal{I}(M)$ conforms to $\mathcal{S}(M)$ by $\sigma$.

\begin{remark}
In practice, a size assignment need only be defined on the
size terms that are used in the schema.
Thus, it can be
finitely specified.  While here we have chosen to keep the notion
of size
assignment and instance separate, it may perhaps be more natural
to think of the size assignment as being part of the instance.
\end{remark}

\begin{figure}
$$
\mathcal{I}(\text{no\texttt{\_}courses}) =
\begin{pmatrix}
5 & 2 & 0 \cr
2 & 1 & 3
\end{pmatrix}
\qquad \qquad
\mathcal{I}(\text{course\texttt{\_}fee}) =
\begin{pmatrix}
300\cr
250\cr
330
\end{pmatrix}
$$
\caption{An example of an instance of a matrix schema.}
\label{fig:matrixdb_instance}
\end{figure}

\begin{example}\label{ex:matrixdb_instance}
This example is similar to Example~\ref{ex:db_instance}. Let $K$ be the set of integers and let $\mathcal{S}$ be a matrix schema on $\{\text{no\texttt{\_}courses},\allowbreak\text{course\texttt{\_}fee}\}$ with $\mathcal{S}(\text{no\texttt{\_}courses}) = \text{student} \times \text{dptm}$ and $\mathcal{S}(\text{course\texttt{\_}fee}) = \text{dptm} \times 1$. Let $\sigma$ be a size assignment with $\sigma(\text{student}) = 2$ and $\sigma(\text{dptm}) = 3$. An instance $\mathcal{I}$ of $\mathcal{S}$ with respect to $\sigma$ is shown in Figure~\ref{fig:matrixdb_instance}.
\end{example}

The semantics for \ML is given by the following matrix operations. Let $A$ be an $m \times n$-matrix over $K$. We define $\mathbf{1}(A)$ to be the $m \times 1$-matrix (i.e., column vector) with $\mathbf{1}(A)_{i,1} = 1$. If $n=1$ (i.e., $A$ is a column vector), then $\mathrm{diag}(A)$ is the $m \times m$-matrix with $\mathrm{diag}(A)_{i,j}$ equal to $A_{i,1}$ if $i = j$ and to $0$ otherwise. If $B$ is an $m \times n$-matrix, then $A \circ B$ denotes the Hadamard product of $A$ and $B$. In other words, $(A \circ B)_{i,j} = A_{i,j}*B_{i,j}$. Matrix addition and matrix multiplication are as usual denoted by $+$ and $\cdot$, respectively.

Formally, let $\mathcal{S}$ be a matrix schema, let $e$ be a \ML expression over $\mathcal{S}$, and let $\mathcal{I}$ be a matrix instance of $\mathcal{S}$. Then the \emph{output} matrix $e(\mathcal{I})$ of $e$ under $\mathcal{I}$ is defined in the obvious way, given the operations just defined. If $e = M$ with $M$ a matrix variable of $\mathcal{S}$, then $e(\mathcal{I})$ is naturally defined to be equal to $\mathcal{I}(M)$.

\begin{remark} \label{rempoint}
Matrix addition and the Hadamard product are the pointwise
applications of addition and product, respectively. The original
definition of \ML \cite{matlang_icdt} is more generally defined
in terms of an arbitrary set $\Omega$ of allowed pointwise
functions. So, \ML as defined above fixes $\Omega$ to
$\{+,\cdot\}$. This restriction was also considered by Geerts~\cite{floris_lagraphs} (who also allows multiplication by constant scalars, but this is not essential).

Also, the original definition of \ML fixes $K$ to the field of complex numbers and complex transpose is considered instead of (ordinary) transpose. Of course, transpose can be expressed using complex transpose and pointwise application of conjugation.
\end{remark}

\begin{table}
  \begin{center}
    \begin{tabular}{l|l|l}
      Mapping & $\MLm \to \ARAm$ & $\ARAm \to \MLm$\\
      \hline
      attributes $A$/size terms $\alpha$ & $\mathrm{row}_\alpha$, $\mathrm{col}_\alpha$ & $\Psi(A)$\\
      schemas $\mathcal{S}$ & $\Gamma(\mathcal{S})$ & $\Theta(\mathcal{S})$\\
      expressions $e$ & $\Upsilon(e)$ & $\Phi(e)$\\
      instances $I$, relations $r$/matrices $M$ & $\Rel_{\mathcal{S},\sigma}(I)$, $\Rel_{s,\sigma}(M)$ & $\Mat_D(I)$, $\Mat_D(r)$
    \end{tabular}
  \end{center}
  \caption{Symbol table for the simulations between \ML and \ARACTWO.}
  \label{tab:symbol_table}
\end{table}

In the following subsections we provide simulations between \ML
and \ARACTWO.  The notations for the different translations that
will be given are summarized in Table~\ref{tab:symbol_table}.

\subsection{Simulating \ML in \ARACTWO}

For notational convenience, instead of fixing a one-to-one correspondence between $\mathbf{rel}$ and $\mathbf{matvar}$, we assume that $\mathbf{rel} = \mathbf{matvar}$.

Let us now fix injective functions $\mathrm{row}: \mathbf{size} \setminus \{1\} \to \mathbf{att}$ and $\mathrm{col}: \mathbf{size} \setminus \{1\} \to \mathbf{att}$ such that (1) $\mathrm{row}(\alpha)$ and $\mathrm{col}(\alpha)$ are compatible for all $\alpha \in \mathbf{size} \setminus \{1\}$ and (2) the range of $\mathrm{row}$ is disjoint from the range of $\mathrm{col}$. To reduce clutter, we also write, for $\alpha \in \mathbf{size} \setminus \{1\}$, $\mathrm{row}(\alpha)$ as $\mathrm{row}_\alpha$ and $\mathrm{col}(\alpha)$ as $\mathrm{col}_\alpha$.

Let $s \in \mathbf{size} \times \mathbf{size}$. We associate to $s$ a relation schema $\Gamma(s)$ with $|\Gamma(s)| \leq 2$ as follows.
\[
\Gamma(s) :=
\begin{cases}
\{\mathrm{row}_\alpha, \mathrm{col}_\beta\} & \text{if } s = \alpha \times \beta; \cr
\{\mathrm{row}_\alpha\} & \text{if } s = \alpha \times 1; \cr
\{\mathrm{col}_\beta\} & \text{if } s = 1 \times \beta; \cr
\emptyset & \text{if } s = 1 \times 1,
\end{cases}
\]
where $\alpha \neq 1 \neq \beta$.

Let $\mathcal{S}$ be a matrix schema on a set of matrix variables $V$. We associate to $\mathcal{S}$ a database schema $\Gamma(\mathcal{S})$ on $V$ as follows. For $M \in V$, we set $(\Gamma(\mathcal{S}))(M) := \Gamma(\mathcal{S}(M))$.

Let $\sigma$ be a size assignment. We associate to $\sigma$ a domain assignment $D(\sigma)$ where, for $\alpha \in \mathbf{size}$, $(D(\sigma))(\mathrm{row}_\alpha) := (D(\sigma))(\mathrm{col}_\alpha) := \{1,\ldots,\sigma(\alpha)\}$.

Let $M \in \mathcal{M}$ conform to $s = \alpha \times \beta$ by $\sigma$.
We associate to
$M$ a relation $\Rel_{s,\sigma}(M):
\mathcal{T}_{D(\sigma)}(\Gamma(s)) \to K$ as follows. We have
$(\Rel_{s,\sigma}(M))(t) := M_{i,j}$, where (1) $i$ is equal to
$t(\mathrm{row}_\alpha)$ if $\alpha \neq 1$ and equal to $1$ if
$\alpha = 1$; and (2) $j$ is equal to $t(\mathrm{col}_\beta)$ if
$\beta \neq 1$ and equal to $1$ if $\beta = 1$.

Let $\mathcal{S}: V \to \mathbf{size} \times \mathbf{size}$ be a
matrix schema and let $\mathcal{I}$ be a matrix instance of
$\mathcal{S}$ with respect to $\sigma$. We associate to
$\mathcal{I}$ an instance
$\Rel_{\mathcal{S},\sigma}(\mathcal{I})$ of database schema
$\Gamma(\mathcal{S})$ with respect to $D(\sigma)$ as follows. For $M \in V$, we set $(\Rel_{\mathcal{S},\sigma}(\mathcal{I}))(M) := \Rel_{\mathcal{S}(M),\sigma}(\mathcal{I}(M))$.

\begin{figure}
\begin{center}
$\mathcal{I}(\text{no\texttt{\_}courses}) = $
\begin{tabular}{ |c|c|c| } 
\hline
$\rm row_{\rm student}$ & $\rm col_{\rm dptm}$ & $K$ \\
\hline
1 & 1 & $5$ \\
1 & 2 & $2$ \\
1 & 3 & $0$ \\
2 & 1 & $2$ \\
2 & 2 & $1$ \\
2 & 3 & $3$ \\
\hline
\end{tabular}
\qquad \qquad
$\mathcal{I}(\text{course\texttt{\_}fee}) = $
\begin{tabular}{ |c|c| } 
\hline
$\rm row_{\rm dptm}$ & $K$ \\
\hline
1 & $300$ \\
2 & $250$ \\
3 & $330$ \\
\hline
\end{tabular}
\end{center}
\caption{Matrix instance from Figure~\ref{fig:matrixdb_instance}
represented as a database instance.}
\label{mamafig}
\end{figure}

\begin{example}
Recall $\mathcal{I}$, $\mathcal{S}$, and $\sigma$ from
Example~\ref{ex:matrixdb_instance}. We have that
$(\Gamma(\mathcal{S}))(\text{no\texttt{\_}courses}) =
\{\mathrm{row}_\text{student},\mathrm{col}_\text{dptm}\}$ and
$(\Gamma(\mathcal{S}))(\text{course\texttt{\_}fee}) =
\{\mathrm{row}_\text{dptm}\}$. The database instance
$\Rel_{\mathcal{S},\sigma}(\mathcal{I})$ is shown in
Figure~\ref{mamafig}.
\end{example}

We now show that every \ML expression can be simulated by an \ARACTWO expression.
\begin{lemma}\label{lem:MLtoARA}
For each \ML expression $e$ over a matrix schema $\mathcal{S}$, there exists an \ARACTWO expression $\Upsilon(e)$ over database schema $\Gamma(\mathcal{S})$ such that (1) $\Gamma(\mathcal{S}(e)) = (\Gamma(\mathcal{S}))(\Upsilon(e))$ and (2) for all size assignments $\sigma$ and matrix instances $\mathcal{I}$ of $\mathcal{S}$ with respect to $\sigma$, we have $\Rel_{\mathcal{S}(e),\sigma}(e(\mathcal{I})) = (\Upsilon(e))(\Rel_{\mathcal{S},\sigma}(\mathcal{I}))$.
\end{lemma}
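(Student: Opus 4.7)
The plan is to proceed by structural induction on the \ML expression $e$, defining $\Upsilon(e)$ case by case and checking simultaneously that $\Gamma(\mathcal{S}(e)) = (\Gamma(\mathcal{S}))(\Upsilon(e))$, that every intermediate relation schema has size at most $2$, and that $\Rel_{\mathcal{S}(e),\sigma}(e(\mathcal{I})) = (\Upsilon(e))(\Rel_{\mathcal{S},\sigma}(\mathcal{I}))$. The base case, $e = M$ a matrix variable, is immediate from $\mathbf{rel} = \mathbf{matvar}$ by setting $\Upsilon(M) := M$; both conditions then follow from the definitions of $\Gamma(\mathcal{S})$ and $\Rel_{\mathcal{S},\sigma}$.

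For the recursive cases, matrix addition and the Hadamard product translate cleanly into union and join respectively: since the two operands carry identical matrix schemas, they produce identical relation schemas under $\Gamma$, and an \ARA join on equal schemas is pointwise multiplication. Transpose of an expression of type $\alpha \times \beta$ is realized by the renaming that swaps $\mathrm{row}_\gamma$ and $\mathrm{col}_\gamma$ for $\gamma \in \{\alpha, \beta\}$, legal by the built-in assumption $\mathrm{row}_\gamma \sim \mathrm{col}_\gamma$. The one-vector $\mathbf{1}(e')$ becomes $\mathbf{1}(\pi_{\Gamma(\alpha \times 1)}(\Upsilon(e')))$: projection reshapes to the desired relation schema, then the \ARA $\mathbf{1}$ operator sets every annotation to $1$. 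Diagonalization is the one small case that uses several operators at once: from the column vector $\Upsilon(e')$ of schema $\{\mathrm{row}_\alpha\}$, I build a twin all-ones copy indexed by $\mathrm{col}_\alpha$ via $\rho_{\mathrm{row}_\alpha \to \mathrm{col}_\alpha}(\mathbf{1}(\Upsilon(e')))$, join the two to obtain a relation of schema $\{\mathrm{row}_\alpha, \mathrm{col}_\alpha\}$ whose $(i,j)$ entry equals $e'_i$, and finally apply $\sigma_{\{\mathrm{row}_\alpha, \mathrm{col}_\alpha\}}$ to zero out the off-diagonal entries.

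The main obstacle, and the reason composition is needed at all, is multiplication $e_1 \cdot e_2$ with $\mathcal{S}(e_1) = \alpha \times \beta$ and $\mathcal{S}(e_2) = \beta \times \gamma$. Semantically, $(AB)_{i,k} = \sum_j A_{i,j} * B_{j,k}$ is a join on the common index $\beta$ followed by summing it out, and the naive translation would join $\Upsilon(e_1)$ with $\rho_{\mathrm{row}_\beta \to \mathrm{col}_\beta}(\Upsilon(e_2))$ and then project away $\mathrm{col}_\beta$. However, the intermediate join would have schema $\{\mathrm{row}_\alpha, \mathrm{col}_\beta, \mathrm{col}_\gamma\}$ of arity $3$, violating the \ARACTWO arity restriction. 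This is exactly the situation that $k$-composition was introduced to handle: I set
\[
\Upsilon(e_1 \cdot e_2) := \zeta_{\mathrm{col}_\beta, 2}\bigl(\Upsilon(e_1),\, \rho_{\mathrm{row}_\beta \to \mathrm{col}_\beta}(\Upsilon(e_2))\bigr),
\]
which by Definition~\ref{def:composition} computes join-then-$\hat\pi_{\mathrm{col}_\beta}$ as a single operator and keeps the expression inside \ARACTWO. Unfolding the definition of $\zeta_{\mathrm{col}_\beta,2}$ and matching terms against the matrix-multiplication formula then gives semantic correctness.

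The only genuinely tedious part of the proof is that a size term equal to $1$ suppresses one attribute of a relation schema (compare the four clauses defining $\Gamma(s)$), so each of the cases above branches on which of $\alpha$, $\beta$, $\gamma$ equals $1$. In every such subcase the construction either simplifies (for instance, if $\beta = 1$ in multiplication, $e_1 \cdot e_2$ is an outer product and reduces to an ordinary join of two vectors after the appropriate renaming, with no composition needed) or uses a strict subset of the operators listed above. Throughout, I verify the schema and semantic translations directly from the definitions of \ML evaluation, of the \ARA operators of Section~\ref{secprelim}, and of the mappings $\Gamma$ and $\Rel_{s,\sigma}$.
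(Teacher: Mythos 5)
Your overall strategy---structural induction with an operator-by-operator translation, using $k$-composition exactly once, namely for matrix multiplication---is the same as the paper's, and every case except multiplication matches the paper's construction up to trivial rewriting. The multiplication case, however, contains a step that fails as written. You translate $e_1 \cdot e_2$ (with $\mathcal{S}(e_1)=\alpha\times\beta$, $\mathcal{S}(e_2)=\beta\times\gamma$, $\beta \neq 1$) as $\zeta_{\mathrm{col}_\beta,2}\bigl(\Upsilon(e_1),\rho_{\mathrm{row}_\beta\to\mathrm{col}_\beta}(\Upsilon(e_2))\bigr)$. When $\gamma=\beta$, the schema of $\Upsilon(e_2)$ is $\{\mathrm{row}_\beta,\mathrm{col}_\beta\}$, so the map sending $\mathrm{row}_\beta$ to $\mathrm{col}_\beta$ while fixing $\mathrm{col}_\beta$ is not a one-to-one correspondence, and $\rho_{\mathrm{row}_\beta\to\mathrm{col}_\beta}(\Upsilon(e_2))$ is not a well-formed \ARA expression. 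This is not a fringe case: it already occurs for $M\cdot M$ with $\mathcal{S}(M)=\alpha\times\alpha$. The underlying point is that three distinct attribute roles are needed (output row, summed-out middle index, output column), and reusing $\mathrm{col}_\beta$ for the middle index collides with $e_2$'s own column attribute whenever the middle size term equals $\gamma$.

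The paper avoids this by renaming the middle index on \emph{both} operands to a fresh attribute $C$ compatible with $\mathrm{row}_\beta$ and chosen outside $\{\mathrm{row}_\alpha,\mathrm{col}_\gamma\}$ (such a $C$ exists because each compatibility class is infinite), and composing on $C$: it uses $\zeta_{C,2}\bigl(\rho_{\varphi_1}(\Upsilon(e_1)),\rho_{\varphi_2}(\Upsilon(e_2))\bigr)$ with $\varphi_1(\mathrm{col}_\beta)=\varphi_2(\mathrm{row}_\beta)=C$ and $\varphi_1,\varphi_2$ the identity elsewhere. With that correction your construction is well defined in all cases, and the semantic verification you sketch is then exactly the paper's.
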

\begin{proof}
Assume that $\mathcal{S}(e) = \alpha \times \beta$, where $\alpha, \beta \in \mathbf{size}$. We define $\Upsilon(e)$ explicitly. 
\begin{itemize}
\item If $e = M$ is a matrix variable of $\mathcal{S}$, then $\Upsilon(e) := M$.

\item If $e = (e')^T$, then 
\[
\Upsilon(e) := 
\begin{cases}
\rho_{\varphi}(\Upsilon(e')) & \text{if } \alpha \neq 1 \neq \beta; \cr
\rho_{\mathrm{col}_\alpha \to \mathrm{row}_\alpha}(\Upsilon(e')) & \text{if } \alpha \neq 1 = \beta; \cr
\rho_{\mathrm{row}_\beta \to \mathrm{col}_\beta}(\Upsilon(e')) & \text{if } \alpha = 1 \neq \beta; \cr
e' & \text{if } \alpha = 1 = \beta,
\end{cases}
\]
where $\varphi$ maps $\mathrm{col}_\alpha$ to $\mathrm{row}_\alpha$ and $\mathrm{row}_\beta$ to $\mathrm{col}_\beta$.

\item If $e = \mathbf{1}(e')$, then 
\[
\Upsilon(e) := 
\begin{cases}
\mathbf{1}(\pi_{\{\mathrm{row}_\alpha\}}(\Upsilon(e'))) & \text{if } \alpha \neq 1; \cr
\mathbf{1}(\pi_\emptyset(\Upsilon(e'))) & \text{if } \alpha = 1.
\end{cases}
\]

\item If $e = \mathrm{diag}(e')$, then
\[
\Upsilon(e) := 
\begin{cases}
\sigma_{\{\mathrm{row}_\alpha,\mathrm{col}_\alpha\}}(\Upsilon(e') \Join \mathbf{1}(\rho_{\mathrm{row}_\alpha \to \mathrm{col}_\alpha}(\Upsilon(e')))) & \text{if } \alpha \neq 1; \cr
\Upsilon(e') & \text{if } \alpha = 1.
\end{cases}
\]

\item If $e = e_1 \cdot e_2$ where $\mathcal{S}(e_1) = \alpha \times \gamma$ and $\mathcal{S}(e_2) = \gamma \times \beta$, then we consider two cases. If $\gamma = 1$, then $\Upsilon(e) := \Upsilon(e_1) \Join \Upsilon(e_2)$. If $\gamma \neq 1$, then $\Upsilon(e) := \zeta_{C,2}(\rho_{\varphi_1}(\Upsilon(e_1)),\rho_{\varphi_2}(\Upsilon(e_2)))$, where $\varphi_1(\mathrm{col}_\gamma) = \varphi_2(\mathrm{row}_\gamma) = C \notin \{\mathrm{row}_\alpha, \mathrm{col}_\beta\}$ and $\varphi_1$ and $\varphi_2$ are the identity otherwise.

\item If $e = e_1 + e_2$, then $\Upsilon(e) := \Upsilon(e_1) \cup \Upsilon(e_2)$.

\item If $e = e_1 \circ e_2$, then $\Upsilon(e) := \Upsilon(e_1) \Join \Upsilon(e_2)$.
\end{itemize}
It is straightforward to verify by induction on the structure of $e$ that $\Upsilon(e)$ satisfies the given properties.
\end{proof}

\subsection{Simulating \ARACTWO in \ML}

In order to simulate \ARACTWO in \ML, we equip
$\mathbf{att}$ with some linear ordering $<$.

Again we assume that $\mathbf{rel} = \mathbf{matvar}$. Let us fix an injective function $\Psi: \mathbf{att} \to \mathbf{size} \setminus \{1\}$.

Let $X \subseteq \{A_1, A_2\}$ be a relation schema for some $A_1$ and $A_2$ with $A_1 < A_2$. We associate to $X$ an element $\Theta(X) \in \mathbf{size} \times \mathbf{size}$ as follows. We have
\[
\Theta(X) :=
\begin{cases}
\Psi(A_1) \times \Psi(A_2) & \text{if } X = \{A_1,A_2\}; \cr
\Psi(A) \times 1 & \text{if } X = \{A\} \text{ for some } A; \cr
1 \times 1 & \text{if } X = \emptyset.
\end{cases}
\]

Let $\mathcal{S}$ a database schema on a set $N$ of relation
names of arities at most $2$.  We associate to $\mathcal{S}$ a
matrix schema $\Theta(\mathcal{S})$ on $N$
as follows. For $R \in N$, we set
$(\Theta(\mathcal{S}))(R) := \Theta(\mathcal{S}(R))$.

Let $D$ be a domain assignment. We associate to $D$ a size assignment $\sigma(D)$ where, for $A \in \mathbf{att}$, $(\sigma(D))(D(A)) = |D(A)|$. If every element in the range of a domain assignment $D$ is of the form $\{1,\ldots,n\}$ for some $n$, then we say that $D$ is \emph{consecutive}. 

Let $D$ be a consecutive domain assignment. Given a relation $r:
\mathcal{T}_D(X) \to K$ with $X \subseteq \{A_1,A_2\}$ and $A_1 <
A_2$, we associate a matrix $\Mat_D(r)$ conforming to $\Theta(X)$
by $\sigma(D)$ as follows. We define $(\Mat_D(r))_{i,j} := r(t)$,
where $t$ is (1) the tuple with $t(A_1) = i$ and $t(A_2) = j$ if
$|X| = 2$; (2) the tuple with $t(A) = i$ and $j = 1$ if $X =
\{A\}$ for some $A$; and (3) the unique tuple of
$\mathcal{T}_D(X)$ if $X = \emptyset$.

Let $\mathcal{S}$ a database schema on a set $N$ of relation
names of arities at most $2$, and let $\mathcal{I}$ be a database of $\mathcal{S}$
instance with respect to $D$. We associate to $\mathcal{I}$ a
matrix instance $\Mat_D(\mathcal{I})$ of
$\Mat(\mathcal{S})$ with respect to $\sigma(D)$ as follows. For
$R \in N$, we set $(\Mat_D(\mathcal{I}))(R) :=
\Mat_D(\mathcal{I}(R))$.

\begin{example}
Recall $\mathcal{I}$, $\mathcal{S}$, and $D$ from Example~\ref{ex:db_instance}. To reduce clutter, assume that $\mathbf{att} = \mathbf{size} \setminus \{1\}$ and that $\Psi$ is the identity function. Take $\text{student} < \text{dptm}$. We have that $(\Theta(\mathcal{S}))(\text{no\texttt{\_}courses}) = \text{student} \times \text{dptm}$ and $(\Theta(\mathcal{S}))(\text{course\texttt{\_}fee}) = \text{dptm} \times 1$. Consider domain assignment $D'$ and database instance $\mathcal{I}'$ obtained from $D$ and $\mathcal{I}$, respectively, by replacing $\text{Alice}$ by $1$, $\text{Bob}$ by $2$, $\text{CS}$ by $1$, $\text{Math}$ by $2$, and $\text{Bio}$ by $3$. Note that $D'$ is consecutive. The instance $\Mat_{D'}(\mathcal{I}')$ is shown in Figure~\ref{fig:matrixdb_instance}.
\end{example}

We now show that every \ARACTWO expression can be simulated by an \ML expression.
\begin{lemma}\label{lem:ARAtoML}
For each \ARACTWO expression $e$ over a database schema
$\mathcal{S}$ of arity at most $2$, there exists a \ML expression
$\Phi(e)$ over matrix schema $\Theta(\mathcal{S})$ such that (1)
$\Theta(\mathcal{S}(e)) = (\Theta(\mathcal{S}))(\Phi(e))$ and (2)
for all consecutive domain assignments $D$ and database instances
$\mathcal{I}$ with respect to $D$, we have $\Mat_D(e(\mathcal{I})) = (\Phi(e))(\Mat_D(\mathcal{I}))$.
\end{lemma}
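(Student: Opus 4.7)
The plan is a structural induction on the \ARACTWO expression $e$, defining $\Phi(e)$ case by case in terms of $\Phi$ applied to sub-expressions, and simultaneously verifying the schema condition (1) and the semantic condition (2). Both reduce to routine entry-wise checks against the definitions of $\Mat_D$, $\Theta$, and the \ML semantics. The base case is $\Phi(R) := R$, valid under the convention $\mathbf{rel} = \mathbf{matvar}$. Union becomes matrix addition, $\Phi(e_1 \cup e_2) := \Phi(e_1) + \Phi(e_2)$. For $\mathbf{1}(e')$, I would build the all-ones matrix of the appropriate shape as the outer product $\mathbf{1}(\Phi(e')) \cdot (\mathbf{1}((\Phi(e'))^T))^T$, with degenerate sub-cases when $\mathcal{S}(e')$ has arity less than two. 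Renaming $\rho_\varphi(e')$ translates to $\Phi(e')$ itself when $\varphi$ preserves the $<$-ordering, or to $(\Phi(e'))^T$ when it inverts it.

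The more involved cases of projection, join, and $2$-composition each require a case analysis on the schema sizes and on which attributes are smaller under $<$. For projection $\pi_Y(e')$, I would process one attribute at a time: summing out the row attribute is left-multiplication by $(\mathbf{1}(\Phi(e')))^T$ followed by a transpose, and summing out the column attribute is right-multiplication by $\mathbf{1}((\Phi(e'))^T)$. Join $e_1 \Join e_2$ splits on $|\mathcal{S}(e_1) \cap \mathcal{S}(e_2)|$: coinciding schemas give a Hadamard product, disjoint pairs forming a two-attribute union give an outer product via matrix multiplication (with transposes to align row/column roles), and partial overlap uses Hadamard combined with broadcasting through all-ones vectors. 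Composition $\zeta_{A,2}(e_1,e_2)$ reduces to a matrix product $\Phi(e_1)' \cdot \Phi(e_2)'$, where each $\Phi(e_i)'$ is $\Phi(e_i)$ or its transpose, arranged so that the shared $A$-axes become the inner contracted dimensions.

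I expect the main obstacle to be selection $\sigma_{\{A_1, A_2\}}(e')$ with distinct compatible attributes $A_1 < A_2$. The target matrix must have schema $\Psi(A_1) \times \Psi(A_2)$ but coincide with $\Phi(e')$ only on the diagonal; a naive Hadamard product with $\mathrm{diag}(\mathbf{1}(\Phi(e')))$ is type-mismatched, since $\mathrm{diag}$ only produces square schemas $\alpha \times \alpha$. Bridging the two distinct size terms $\Psi(A_1) \neq \Psi(A_2)$---which become equal only once the size assignment induced by $A_1 \sim A_2$ is applied---is the crux. I would attempt this by first extracting the diagonal of $\Phi(e')$ as a column vector of schema $\Psi(A_1) \times 1$ via a combination of $\mathrm{diag}$, Hadamard, and multiplication, then re-diagonalizing it via $\mathrm{diag}$ and post-multiplying by a bridging expression of schema $\Psi(A_1) \times \Psi(A_2)$ built out of $\Phi(e')$ and its all-ones vectors. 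The final entry-wise verification then relies on the identity $(\sigma(D))(\Psi(A_1)) = (\sigma(D))(\Psi(A_2))$, which is forced by $A_1 \sim A_2$.
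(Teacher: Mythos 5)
Your translation agrees with the paper's on every case except selection, and there your instinct about the type system is partly right but your proposed repair cannot work. The paper's own proof simply sets $\Phi(\sigma_Y(e')) := \Phi(e') \circ \mathrm{diag}(\mathbf{1}(\Phi(e')))$ when $|Y|=2$; as you observe, if the two selected attributes $A_1 \sim A_2$ are sent by $\Psi$ to \emph{distinct} size terms, this Hadamard product is ill-typed, since $\mathrm{diag}$ yields schema $\Psi(A_1)\times\Psi(A_1)$ while $\Phi(e')$ has schema $\Psi(A_1)\times\Psi(A_2)$. However, no expression-level bridge exists. Every \ML expression over a schema in which $\alpha=\Psi(A_1)$ and $\beta=\Psi(A_2)$ are distinct size terms is equivariant under \emph{independent} permutations of the $\alpha$-indices and the $\beta$-indices: each \ML operation, including $\mathrm{diag}$ (whose input and output indices all carry the same size term), commutes with such reindexings. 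Diagonal restriction is not equivariant in this sense: for $M$ with a single nonzero entry in position $(1,2)$, permuting only the columns changes the diagonal from $(0,0)$ to $(1,0)$. In particular your very first step---extracting the diagonal of $\Phi(e')$ as a column vector of schema $\Psi(A_1)\times 1$---is already inexpressible, and so is any ``bridging'' expression of schema $\Psi(A_1)\times\Psi(A_2)$ whose value is the identity matrix on every instance. No combination of $\mathrm{diag}$, Hadamard, multiplication, and one-vectors will get you there.

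The correct resolution is to fix the encoding rather than the expression: $\Psi$ should be constant on $\sim$-equivalence classes and injective across classes, so that compatible attributes---the only attributes on which \ARA permits a selection---share a single size term. Then $\Phi(e')$ has square schema $\alpha\times\alpha$, the ``naive'' $\Phi(e')\circ\mathrm{diag}(\mathbf{1}(\Phi(e')))$ is well-typed and correct, and the equivariance obstruction disappears because the row and column indices must now be permuted together. (The paper states $\Psi$ injective on all of $\mathbf{att}$, but its proof implicitly relies on this identification; you have in effect located that slip, only to patch it in the wrong place.) With that adjustment, the remainder of your induction---projection by multiplication with one-vectors, join by case analysis on the schema overlap with $\mathrm{diag}$ or broadcasting for partial overlap, and $\zeta_{A,2}$ as a matrix product with orienting transposes---goes through and coincides with the paper's construction.
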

\begin{proof}
Assume that $\mathcal{S}(e) \subseteq \{A_1,A_2\}$. We explicitly define $\Phi(e)$.
\begin{itemize}
\item If $e = R$ is a relation name of $\mathcal{S}$, then $\Phi(e) := R$.

\item If $e = e_1 \cup e_2$, then $\Phi(e) := \Phi(e_1) + \Phi(e_2)$.

\item If $e = \hat\pi_A(e')$ and $A_1 < A_2$, then 
\[
\Phi(e) :=
\begin{cases}
\Phi(e') \cdot \mathbf{1}(\Phi(e')^T) & \text{if } A = A_2 \text{ and } \mathcal{S}(e') = \{A_1, A_2\}; \cr
\mathbf{1}(\Phi(e'))^T \cdot \Phi(e') & \text{otherwise}.
\end{cases}
\]

\item If $e = \sigma_{Y}(e')$, then 
\[
\Phi(e) :=
\begin{cases}
\Phi(e') & \text{if } |Y| \leq 1; \cr
\Phi(e') \circ \mathrm{diag}(\mathbf{1}(\Phi(e'))) & \text{if } |Y| = 2.
\end{cases}
\]

\item If $e = \rho_\varphi(e')$ and $A_1 < A_2$, then
\[
\Phi(e) :=
\begin{cases}
\Phi(e')^T & \text{if } \varphi(A_1) > \varphi(A_2) \text{ and } \mathcal{S}(e') = \{A_1, A_2\}; \cr
\Phi(e') & \text{otherwise}.
\end{cases}
\]

\item If $e = \mathbf{1}(e')$, then
\[
\Phi(e) :=
\begin{cases}
\mathbf{1}(\Phi(e')) \cdot \mathbf{1}(\Phi(e'))^T & \text{if } \mathcal{S}(e') = \{A_1, A_2\}; \cr
\mathbf{1}(\Phi(e')) & \text{otherwise}.
\end{cases}
\]

\item If $e = e_1 \Join e_2$ and $A_1 < A_2$, then
\[
\Phi(e) :=
\begin{cases}
\Phi(e_1) \circ \Phi(e_2) & \text{if } \mathcal{S}(e_1) = \mathcal{S}(e_2); \cr
s(e_1,e_2) \circ \Phi(e_2) & \text{if } \mathcal{S}(e_1) = \emptyset; \cr
\Phi(e_1) \circ s(e_2,e_1) & \text{if } \mathcal{S}(e_2) = \emptyset; \cr
\Phi(e_1) \cdot \Phi(e_2)^T & \text{if } \mathcal{S}(e_1) = \{A_1\} \text{ and } \mathcal{S}(e_2) = \{A_2\}; \cr
(\Phi(e_1) \cdot \Phi(e_2)^T)^T & \text{if } \mathcal{S}(e_1) = \{A_2\} \text{ and } \mathcal{S}(e_2) = \{A_1\}; \cr
\mathrm{diag}(\Phi(e_1)) \cdot \Phi(e_2) & \text{if } \mathcal{S}(e_1) = \{A_1\} \text{ and } \mathcal{S}(e_2) = \{A_1,A_2\}; \cr
(\Phi(e_1)^T \cdot \mathrm{diag}(\Phi(e_2)))^T & \text{if } \mathcal{S}(e_1) = \{A_1,A_2\} \text{ and } \mathcal{S}(e_2) = \{A_1\}; \cr
\Phi(e_1) \cdot \mathrm{diag}(\Phi(e_2)) & \text{if } \mathcal{S}(e_1) = \{A_1,A_2\} \text{ and } \mathcal{S}(e_2) = \{A_2\}; \cr
(\mathrm{diag}(\Phi(e_1)) \cdot \Phi(e_2)^T)^T & \text{if } \mathcal{S}(e_1) = \{A_2\} \text{ and } \mathcal{S}(e_2) = \{A_1,A_2\},
\end{cases}
\]
where $s(e,e')$ denotes $\mathbf{1}(\Phi(e')) \cdot \Phi(e) \cdot \mathbf{1}(\Phi(e')^T)^T$.

\item If $e = \zeta_{A_3,2}(e_1,e_2)$ with $\mathcal{S}(e_1) = \{A_1,A_3\}$ and $\mathcal{S}(e_2) = \{A_2,A_3\}$, then
\[
\Phi(e) :=
\left(\Phi(e_1)^{T(A_1,A_3)} \cdot \Phi(e_2)^{T(A_3,A_2)}\right)^{T(A_1,A_2)},
\]
where $(\cdot)^{T(A,B)}$, for attributes $A$ and $B$, denotes identity if $A < B$ and transpose if $A > B$.
\end{itemize}

The last bullet covers the case where $|\mathcal{S}(e_1) \sdiff \mathcal{S}(e_2)| = 2$, where $\sdiff$ denotes symmetric difference. If $|\mathcal{S}(e_1) \sdiff \mathcal{S}(e_2)| \leq 1$, then $\zeta_{A_3,2}(e_1,e_2) \equiv \hat\pi_{A_3}(e_1 \Join e_2)$ is expressible in $\ARAm(2)$ (since then $|\mathcal{S}(e_1 \Join e_2)| \leq 2$) and so $\Phi$ can be extended to cover this case as well.

It is straightforward to verify by induction on the structure of $e$ that $\Phi(e)$ satisfies the given properties.
\end{proof}

We remark that the number of cases in the expression for $\Phi(e)$ with $e = e_1 \Join e_2$ in the above proof can be significantly reduced if we assume that $K$ is commutative (i.e., join is commutative).

\subsection{Relationship with $\ARAm(3)$ and complexity}

Corollary~\ref{cor:ARA_kp1+k}, Lemma~\ref{lem:MLtoARA}, and Lemma~\ref{lem:ARAtoML} together establish the equivalence of \ML with the language $\ARAm(3)$ restricted to database schemas and output relations of arity at most $2$.

\begin{theorem}\label{thm:MLequivARA}
For each $\ARAm(3)$ expression $e$ over a database schema
$\mathcal{S}$ of arity at most $2$ and with $|\mathcal{S}(e)|
\leq 2$, there exists a \ML expression $e'$ such that
$\Mat_D(e(\mathcal{I})) = e'(\Mat_D(\mathcal{I}))$ for all
consecutive domain assignments $D$ and instances $\mathcal{I}$
with respect to $\mathcal{S}$ over $D$.

Conversely, for each \ML expression $e$ over a matrix schema $\mathcal{S}$, there exists an $\ARAm(3)$ expression $e'$ such that $\Rel_{\mathcal{S}(e),\sigma}(e(\mathcal{I})) = e'(\Rel_{\mathcal{S},\sigma}(\mathcal{I}))$ for all size assignments $\sigma$ and matrix instances $\mathcal{I}$ of $\mathcal{S}$ with respect to $\sigma$.
\end{theorem}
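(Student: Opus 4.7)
The plan is to assemble Theorem~\ref{thm:MLequivARA} as a direct composition of the three preceding results in this section: Corollary~\ref{cor:ARA_kp1+k} (applied with $k=2$, which requires $K$ commutative, an assumption I adopt here), Lemma~\ref{lem:MLtoARA}, and Lemma~\ref{lem:ARAtoML}. All the substantive technical work has been done in those statements, so the proof is essentially a bookkeeping exercise verifying that the chain of translations composes correctly.

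For the first direction, I would start with an $\ARAm(3)$ expression $e$ satisfying the stated arity constraints on both $\mathcal{S}$ and $\mathcal{S}(e)$. Applying Corollary~\ref{cor:ARA_kp1+k} with $k=2$ produces an equivalent \ARACTWO expression $\tilde e$ over $\mathcal{S}$. Feeding $\tilde e$ into Lemma~\ref{lem:ARAtoML} then yields a \ML expression $e' := \Phi(\tilde e)$ over $\Theta(\mathcal{S})$ satisfying $\Mat_D(\tilde e(\mathcal{I})) = e'(\Mat_D(\mathcal{I}))$ for every consecutive domain assignment $D$ and every instance $\mathcal{I}$. Since $e$ and $\tilde e$ are equivalent on all instances, the same equality holds with $e$ in place of $\tilde e$, proving the first claim.

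For the converse direction, given a \ML expression $e$ over a matrix schema $\mathcal{S}$, I would simply invoke Lemma~\ref{lem:MLtoARA} to obtain an \ARACTWO expression $e' := \Upsilon(e)$ over $\Gamma(\mathcal{S})$ with the required simulation property relative to $\Rel$. It then remains only to observe that every \ARACTWO expression is syntactically an $\ARAm(3)$ expression: by Definition~\ref{def:composition}, $\zeta_{A,2}(e_1,e_2)$ unfolds to $\hat\pi_A(e_1 \Join e_2)$, whose intermediate join has schema of cardinality at most $3$, while all other constructors keep us inside $\ARAm(2) \subseteq \ARAm(3)$. Hence $e'$ is also an $\ARAm(3)$ expression, as required.

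There is no real obstacle here beyond verifying compatibility of conventions: the schema-translation maps $\Gamma,\Theta$ and the instance-translation maps $\Rel,\Mat$ have been defined precisely so that the two simulation lemmas interlock, and the consecutive-domain hypothesis needed by Lemma~\ref{lem:ARAtoML} coincides with the one already appearing in the statement. The conceptually nontrivial step—reducing intermediate arity $3$ down to arity $2$ via composition—has already been carried out in Corollary~\ref{cor:ARA_kp1+k}, so what remains truly is a one-paragraph composition argument for each direction.
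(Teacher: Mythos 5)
Your proposal is correct and matches the paper exactly: the paper derives Theorem~\ref{thm:MLequivARA} precisely by composing Corollary~\ref{cor:ARA_kp1+k} (with $k=2$) with Lemmas~\ref{lem:MLtoARA} and~\ref{lem:ARAtoML}, together with the observation that \ARACTWO is subsumed by $\ARAm(3)$. Your explicit remark that the first direction inherits the commutativity assumption on $K$ from Corollary~\ref{cor:ARA_kp1+k} is a point the theorem statement leaves implicit, and is worth noting.
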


As to complexity, we note that the translations $\Upsilon$ and
$\Phi$ given by Lemmas \ref{lem:MLtoARA} and \ref{lem:ARAtoML},
taken at face value, are exponential.  They can, however, be
readily adapted to become linear (for fixed schemas; quadratic
when the schema is part of the input).  The adaptations that need
to be done are as follows.

For a \ML expression $e'$ with $\mathcal{S}(e') = \alpha \times
1$ with $\alpha \neq 1$, there is a constant-length expression
$\mathrm{Tp}_\alpha$ with $\mathcal{S}(\mathrm{Tp}_\alpha) =
\alpha \times 1$. Indeed, since $\alpha$ is a size term of
$\mathcal{S}(e')$ distinct from $1$, there is a matrix variable
$M$ with $\mathcal{S}(M)$ equal to either $\alpha \times \gamma$
or $\gamma \times \alpha$ for some $\gamma$. Taking
$\mathrm{Tp}_\alpha := \mathbf{1}(M)$ in the former case and
$\mathrm{Tp}_\alpha := \mathbf{1}(M^T)$ in the latter case, we
have $\mathcal{S}(\mathrm{Tp}_\alpha) = \alpha \times 1$ as
desired. The only source of exponential growth in
Lemma~\ref{lem:MLtoARA} is the expression
$\sigma_{\{\mathrm{row}_\alpha,\mathrm{col}_\alpha\}}(\Upsilon(e')
\Join \mathbf{1}(\rho_{\mathrm{row}_\alpha \to
\mathrm{col}_\alpha}(\Upsilon(e'))))$ appearing in the
$\mathrm{diag}(e')$ case, which is equivalent to
$\sigma_{\{\mathrm{row}_\alpha,\mathrm{col}_\alpha\}}(\Upsilon(e')
\Join \rho_{\mathrm{row}_\alpha \to
\mathrm{col}_\alpha}(\Upsilon(\mathrm{Tp}_\alpha)))$.

For the converse translation,
we observe that, for an \ARA expression $e'$ with $X := \mathcal{S}(e')
\subseteq \{A_1,A_2\}$, there is a constant-length expression
$\mathrm{Tp}_X$ with $\mathcal{S}(\mathrm{Tp}_X) = X$. Indeed, if
$A \in \mathcal{S}(e')$, then there exists $A' \in
\mathcal{S}(R_{A'})$ for some relation name $R_{A'}$ such that
$A'$ is compatible with $A$. Taking $\mathrm{Tp}_X := \Join_{A
\in X} \rho_{A' \to A}(\pi_{\{A'\}}(R_{A'}))$ if $X \neq
\emptyset$ and $\mathrm{Tp}_\emptyset := \pi_{\emptyset}(R)$ for
some relation name $R$, we have $\mathcal{S}(\mathrm{Tp}_X) = X$
as desired. Replacing each occurrence of $\mathbf{1}(\Phi(e'))$
by the equivalent expression
$\mathbf{1}(\Phi(\mathrm{Tp}_{\mathcal{S}(e')}))$ and each
occurrence of $\mathbf{1}(\Phi(e')^T)$ by the equivalent
expression $\mathbf{1}(\Phi(\mathrm{Tp}_{\mathcal{S}(e')})^T)$ in
the proof of Lemma~\ref{lem:ARAtoML} avoids
exponential growth.

\subsection{Indistinguishability}

Using a
recent result by Geerts on indistinguishability in \ML
\cite{floris_lagraphs}, we can
also relate $\ARAm(3)$ to $\mathsf{C}^3$, the three-variable
fragment of first-order logic with counting \cite{otto_bounded}.
Let $A_1$ and $A_2$
be matrices of the same dimensions $m \times n$.
We view $A_1$ and $A_2$ as instances of a schema $\mathcal S$
on a single matrix name $M$ with $\mathcal S(M)=\alpha \times
\beta$, with respect to the size assignment $\sigma$ that maps $\alpha$ to
$m$ and $\beta$ to $n$.
We say that $A_1$ and $A_2$ are indistinguishable in
\ML, denoted by $A_1 \equiv_{\MLm} A_2$, if for each \ML
expression $e$ over $\mathcal S$ with $\mathcal{S}(e) =
\emptyset$,
we have $e(A_1) = e(A_2)$.
Similarly, one can define indistinguishability of
binary $K$-relations $r_1$ and $r_2$ in $\ARAm(3)$, denoted by $r_1
\equiv_{\ARAm(3)} r_2$. This leads to the following corollary to
Theorem~\ref{thm:MLequivARA}. Let $s=\alpha\times\beta$.

\begin{corollary}\label{cor:MLequivARA}
$A_1 \equiv_{\MLm} A_2$
if and only if $\Rel_{s,\sigma}(A_1) \equiv_{\ARAm(3)} \Rel_{s,\sigma}(A_2)$.
\end{corollary}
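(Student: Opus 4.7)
The plan is to derive the corollary directly from Theorem~\ref{thm:MLequivARA}, using two straightforward facts about the encodings $\Rel_{s,\sigma}$ and $\Mat_D$. First, on scalar-typed outputs ($1 \times 1$ on the \ML side, $\emptyset$ on the \ARA side), both $\Rel$ and $\Mat$ restrict to natural bijections with $K$, so equality of outputs in either formalism is the same condition. Second, for any matrix $A$ conforming to $s = \alpha \times \beta$ by $\sigma$, the definitions give $\Mat_{D(\sigma)}(\Rel_{s,\sigma}(A)) = A$, provided we fix the linear order on $\mathbf{att}$ so that $\mathrm{row}_\alpha < \mathrm{col}_\beta$ (which we may do without loss of generality). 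These observations mean that the schema-level translations $\Gamma$, $\Theta$ of Lemmas~\ref{lem:MLtoARA} and~\ref{lem:ARAtoML} lift to genuine inverses at the level of data, modulo a coherent choice of naming conventions.

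For the forward direction, fix an $\ARAm(3)$ expression $e$ over $\Gamma(\mathcal{S})$ with $\mathcal{S}(e) = \emptyset$. By Theorem~\ref{thm:MLequivARA} there is a \ML expression $e'$ over $\Theta(\Gamma(\mathcal{S}))$---which we arrange to coincide with $\mathcal{S}$ via a suitable choice of $\Psi$, $\mathrm{row}$, $\mathrm{col}$---with $\mathcal{S}(e') = 1 \times 1$ and
\[
\Mat_{D(\sigma)}(e(\mathcal{I})) = e'(\Mat_{D(\sigma)}(\mathcal{I}))
\]
for every consecutive domain assignment $D(\sigma)$ and instance $\mathcal{I}$. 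Applying this to the database sending the single relation name $M$ to $\Rel_{s,\sigma}(A_i)$ and invoking the round-trip identity yields $\Mat_{D(\sigma)}(e(\Rel_{s,\sigma}(A_i))) = e'(A_i)$. Since $A_1 \equiv_{\MLm} A_2$ and $e'$ is scalar-valued, $e'(A_1) = e'(A_2)$; bijectivity of $\Mat$ on scalars then gives $e(\Rel_{s,\sigma}(A_1)) = e(\Rel_{s,\sigma}(A_2))$, as required.

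The converse direction is entirely analogous: given a \ML expression $e$ with $\mathcal{S}(e) = 1 \times 1$, the second half of Theorem~\ref{thm:MLequivARA} supplies an $\ARAm(3)$ expression $e'$ with $\mathcal{S}(e') = \emptyset$ and $\Rel_{1 \times 1,\sigma}(e(\mathcal{J})) = e'(\Rel_{\mathcal{S},\sigma}(\mathcal{J}))$ for all matrix instances $\mathcal{J}$; specialising to $\mathcal{J}(M) = A_i$ reduces the hypothesis $\Rel_{s,\sigma}(A_1) \equiv_{\ARAm(3)} \Rel_{s,\sigma}(A_2)$ to $e(A_1) = e(A_2)$. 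The main (and quite mild) obstacle is the bookkeeping needed to align $\mathcal{S}$ with $\Theta(\Gamma(\mathcal{S}))$, and symmetrically in the other direction; this comes down to fixing $\Psi$, $\mathrm{row}$, $\mathrm{col}$ and the order on $\mathbf{att}$ coherently, and does not affect the substance of the argument.
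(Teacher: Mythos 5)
Your proposal is correct and takes essentially the same route as the paper: the paper states the corollary as an immediate consequence of Theorem~\ref{thm:MLequivARA} without further proof, and your argument is precisely the expected unpacking of that reduction (applying each direction of the theorem to a scalar-output expression, using the round-trip identity $\Mat_{D(\sigma)}(\Rel_{s,\sigma}(A))=A$ and injectivity of the encodings on scalar outputs). The bookkeeping about choosing $\Psi$, $\mathrm{row}$, $\mathrm{col}$ and the attribute order coherently is exactly the right caveat and does not affect the substance.
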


Geerts's result concerns finite undirected graphs $G_1$ and $G_2$
with the same number of nodes.
Recall that $G_1$ and $G_2$ are called indistinguishable in
$\mathsf{C}^3$, denoted by $G_1 \equiv_{\mathsf{C}^3} G_2$, if
each $\mathsf{C}^3$-sentence over a single binary
relation variable has the same truth value on $G_1$ and $G_2$.
Denote the adjacency matrix of $G$ by $\mathsf{Adj}(G)$.

\begin{theorem}[\cite{floris_lagraphs}]\label{thm:geerts_indist}
Fix $K$ to be the field of complex numbers.
Then $\mathsf{Adj}(G_1) \equiv_{\MLm} \mathsf{Adj}(G_2)$ if and only if $G_1 \equiv_{\mathsf{C}^3} G_2$.
\end{theorem}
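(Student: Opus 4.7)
The plan is to derive the theorem from Corollary~\ref{cor:MLequivARA} together with the classical characterisation of $\mathsf{C}^3$-equivalence by homomorphism counts. Set $K = \mathbb{C}$ throughout. Choose a matrix schema $\mathcal{S}$ with a single matrix variable $M$ and $\mathcal{S}(M) = \alpha \times \alpha$, and a size assignment with $\sigma(\alpha) = |V(G_i)|$. Applying Corollary~\ref{cor:MLequivARA} to the adjacency matrices gives $\mathsf{Adj}(G_1) \equiv_{\MLm} \mathsf{Adj}(G_2)$ if and only if $r_1 \equiv_{\ARAm(3)} r_2$, where $r_i := \Rel_{s,\sigma}(\mathsf{Adj}(G_i))$ is a symmetric $\{0,1\}$-valued binary $\mathbb{C}$-relation. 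So it remains to prove that, for such graph-encoding binary $\mathbb{C}$-relations, $\equiv_{\ARAm(3)}$ on the relations coincides with $\equiv_{\mathsf{C}^3}$ on the underlying graphs.

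I would prove this last equivalence by showing that both sides coincide with the intermediate condition \emph{``$\hom(H,G_1) = \hom(H,G_2)$ for every graph $H$ of tree-width at most $2$''}. The equivalence of this condition with $\equiv_{\mathsf{C}^3}$ is Dvo\v{r}\'ak's theorem on homomorphism duality for counting logic. For the $\ARAm(3)$ side, I would invoke the normal form of Theorem~\ref{thm:ara_equiv_bara}: every empty-schema $\ARAm(3)$ expression can be rewritten as a non-negative integer combination of selections of joins of \ARACTWO expressions. Each such basic subexpression, evaluated on $r_i$ and unfolded through the summation semantics of join, projection, and $\zeta_{A,2}$, becomes a sum $\sum_{\bar v} \prod_{(a,b)\in E_H} r_i(v_a,v_b)$ indexed by assignments of a fixed conjunctive pattern $H$; since the $r_i$ are $\{0,1\}$-valued, this evaluates to $\hom(H,G_i)$. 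The nested $\zeta_{A,2}$ structure of the expression directly exhibits a tree decomposition of $H$ of width at most~$2$. Conversely, every tree-width-$2$ graph admits an elimination order realisable by $\zeta_{A,2}$'s whose intermediate arities stay bounded by $3$, yielding an $\ARAm(3)$ expression $e_H$ with $e_H(r_i) = \hom(H,G_i)$.

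Combining these two characterisations gives $r_1 \equiv_{\ARAm(3)} r_2$ iff $G_1 \equiv_{\mathsf{C}^3} G_2$; composing with Corollary~\ref{cor:MLequivARA} completes the proof. The main obstacle is the tight calibration $\ARAm(3) \leftrightarrow$ \emph{tree-width-$2$ homomorphism counts}: one direction requires reading a width-$2$ tree decomposition off the syntactic structure of an $\ARAm(3)$ expression, with $\zeta_{A,2}$ providing the elimination steps and the arity-$3$ restriction matching bag-size $3$, while the other requires translating a width-$2$ elimination order into a composition-based $\ARAm(3)$ expression without ever producing an intermediate $K$-relation of arity $4$. Commutativity of $\mathbb{C}$ is essential here, exactly as in Theorem~\ref{thm:ara_equiv_bara}, and the field structure is used to identify the numerical values produced by $\ARAm(3)$ with genuine homomorphism counts rather than equivalence classes modulo some semiring identity.
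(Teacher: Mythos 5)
First, note that the paper does not actually prove this statement: it is quoted, with attribution, from Geerts \cite{floris_lagraphs}, and is then \emph{used} (together with Corollary~\ref{cor:MLequivARA}) to derive Corollary~\ref{cor:geerts_indist}. So your proposal is not competing with a proof in this paper but with Geerts's own argument. That said, your route is non-circular and sound in outline: Corollary~\ref{cor:MLequivARA} rests only on Theorem~\ref{thm:MLequivARA}, so you may invoke it here, and what you are really doing is proving Corollary~\ref{cor:geerts_indist} directly---via the chain ``$\ARAm(3)$-equivalence $\Leftrightarrow$ equality of homomorphism counts from tree-width-$2$ graphs $\Leftrightarrow$ $\mathsf{C}^3$-equivalence''---and then reading Theorem~\ref{thm:geerts_indist} back off it, inverting the paper's order of deduction. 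The cost is that both links of your chain are heavyweight: the second is Dvo\v{r}\'ak's theorem, an external ingredient comparable in depth to the statement being proved, and the first is essentially the Dell--Grohe--Rattan/FAQ-width calibration, which you assert rather than establish. That calibration is where all the remaining work sits. In the forward direction, an \ARACTWO subexpression is not a single conjunctive pattern: unions, $\mathbf{1}$, and selections occur at arbitrary depth, so the unfolding yields nonnegative integer combinations of patterns that may contain loops, multi-edges, and isolated vertices, and you must argue that loops evaluate to $0$ on loop-free $\{0,1\}$ adjacency relations, that multi-edges collapse because the entries are idempotent under multiplication, and that the induction producing a width-$2$ decomposition threads correctly through $\zeta_{A,2}$ and the arity-$\leq 2$ joins. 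In the backward direction, the variable-elimination encoding of a width-$2$ tree decomposition must be checked never to produce an intermediate schema of cardinality $4$. One detail you handle correctly and should keep: taking $\mathcal{S}(M)=\alpha\times\alpha$ with a single size symbol, so that $\mathrm{row}_\alpha$ and $\mathrm{col}_\alpha$ are compatible; with two incompatible size symbols the selections and compositions needed on the \ARA side would be ill-typed. In summary: a legitimate and genuinely different derivation in outline, but the central calibration of $\ARAm(3)$ against tree-width-$2$ homomorphism counts is left as a sketch, and it carries at least as much mathematical content as the theorem itself.
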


We can immediately conclude the following, still fixing $K$ to be
the field of complex numbers.
\begin{corollary}\label{cor:geerts_indist}
$G_1 \equiv_{\mathsf{C}^3} G_2$ if and only if
$\Rel_{s,\sigma}(\mathsf{Adj}(G_1)) \equiv_{\ARAm(3)}
\Rel_{s,\sigma}(\mathsf{Adj}(G_2))$ (for suitable $s$ and
$\sigma$).
\end{corollary}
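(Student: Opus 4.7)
The plan is simply to chain Theorem~\ref{thm:geerts_indist} with Corollary~\ref{cor:MLequivARA}: the former converts $\mathsf{C}^3$-indistinguishability of the graphs into \ML-indistinguishability of their adjacency matrices, and the latter converts \ML-indistinguishability of matrices into $\ARAm(3)$-indistinguishability of the associated binary $K$-relations.

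First I would pin down the ``suitable $s$ and $\sigma$''. Since $G_1$ and $G_2$ are undirected graphs with the same number of nodes $n$, their adjacency matrices $\mathsf{Adj}(G_1)$ and $\mathsf{Adj}(G_2)$ are both $n \times n$ complex matrices. Pick distinct $\alpha, \beta \in \mathbf{size} \setminus \{1\}$, let $s := \alpha \times \beta$, and take any size assignment $\sigma$ with $\sigma(\alpha) = \sigma(\beta) = n$. Then both adjacency matrices conform to $s$ by $\sigma$, so they are valid instances of the single-variable schema $\mathcal{S}$ with $\mathcal{S}(M) = s$ as set up just before Corollary~\ref{cor:MLequivARA}.

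Next I would invoke the two equivalences. Theorem~\ref{thm:geerts_indist} gives $G_1 \equiv_{\mathsf{C}^3} G_2$ iff $\mathsf{Adj}(G_1) \equiv_{\MLm} \mathsf{Adj}(G_2)$, and Corollary~\ref{cor:MLequivARA} applied with $A_i := \mathsf{Adj}(G_i)$ gives $\mathsf{Adj}(G_1) \equiv_{\MLm} \mathsf{Adj}(G_2)$ iff $\Rel_{s,\sigma}(\mathsf{Adj}(G_1)) \equiv_{\ARAm(3)} \Rel_{s,\sigma}(\mathsf{Adj}(G_2))$. Concatenating these two equivalences is the whole proof.

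There is no real obstacle here: the only content beyond citing the two prior results is verifying that the choice of $s$ and $\sigma$ makes sense simultaneously for both matrices, which follows from the hypothesis that $G_1$ and $G_2$ have the same number of nodes. The corollary is thus an immediate consequence.
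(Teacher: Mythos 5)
Your proposal is correct and matches the paper's reasoning exactly: the paper offers no explicit proof, stating only that the corollary follows ``immediately'' by chaining Theorem~\ref{thm:geerts_indist} with Corollary~\ref{cor:MLequivARA}, which is precisely your argument. Your additional care in fixing $s = \alpha \times \beta$ and $\sigma(\alpha) = \sigma(\beta) = n$ simply makes explicit the setup the paper already establishes just before Corollary~\ref{cor:MLequivARA}.
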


\section{Conclusion} \label{seconcl}

In related work, Yan, Tannen, and Ives consider provenance
for linear algebra operators \cite{provsemirings_la}.  In that
approach, provenance tokens represent not the matrix entries (as in
our work), but the matrices themselves.  Polynomial expressions
(with matrix addition and matrix multiplication) are derived to
show the provenance of linear algebra operations applied to
these matrices.  

\newcommand{\laggr}{\mathcal{L}_{\rm Aggr}}
Our result that every matrix query expressible in $\ARAm(3)$ is also
expressible in \ML provides a partial converse to the
observation already made in the original paper
\cite{matlang_icdt}, to the effect that \ML can be expressed
in $\laggr(3)$: the relational calculus with summation
and numerical functions \cite{libkin_sql}, restricted to three base
variables.\footnote{$\laggr$ is a two-sorted logic with
base variables and numerical variables.} This observation was
made in the extended setting of \ML that allows arbitrary
pointwise functions (Remark~\ref{rempoint}).
For the language considered here, $\ARAm(3)$
provides a more appropriate upper bound for comparison, and
$\ARAm(3)$ is still a natural fragment of $\laggr(3)$.

When allowing arbitrary pointwise functions in \ML, we actually
move beyond the positive relational algebra, as queries involving
negation can be expressed.  For example, applying the function $x
\land \lnot y$ pointwise to the entries of two $n\times n$
boolean matrices representing two binary relations $R$ and $S$ on
$\{1,\dots,n\}$, we obtain the set difference $R - S$.  It is an
interesting research question to explore expressibility of
queries in \ML in this setting.  For example, consider the
following $\laggr(3)$ query on two matrices $M$ and $N$: \[
\forall i \exists j \forall k \forall x (M(i,k,x) \to \exists i
\, N(j,i,x)) \]
Here, $M(i,k,x)$ means that $M_{i,k}=x$, and similarly for
$N(j,i,x)$.

The above query, which does not even use summation,
reuses the base variable $i$ and checks whether each row of $M$,
viewed as a set of entries, is included in some row of $N$, again
viewed as a set of entries.  We conjecture that the query is not
expressible in \ML with arbitrary pointwise functions.
Developing techniques for showing this is an interesting
direction for further research.

Finally, recall that our main result
Corollary~\ref{cor:ARA_kp1+k} assumes that $K$ is commutative.
It should be investigated whether or not this result still holds in the
noncommutative case.

\subsection*{Acknowledgements}
We thank Floris Geerts for inspiring discussions. R.B.\ is a postdoctoral fellow of the Research Foundation -- Flanders (FWO).

\bibliographystyle{plainurl}
\bibliography{database}

\end{document}